\newtheorem{theorem}{Theorem}[section]
\newtheorem{lemma}[theorem]{Lemma}
\newtheorem{corollary}[theorem]{Corollary}
\newtheorem{proposition}[theorem]{Proposition}
\newcommand{\emphdef}[1]{\textbf{\textit{#1}}}
\newcommand{\cZ}{\mathbb{Z}}
\newcommand{\cR}{\mathbb{R}}
\newcommand{\torus}{\mathbb{T}}
\title{Making Multicurves Cross Minimally on Surfaces}
\author{Loïc Dubois\thanks{LIGM, Université Gustave Eiffel, F-77454 Marne-la-Vallée, France. LORIA, Inria, Université de Lorraine, F-54000 Nancy, France. The author was supported by the grant ANR-19-CE40-0014 (MIN-MAX), of the French National Research Agency ANR.}}
\date{2024}
\begin{document}

\maketitle

\begin{abstract}
On an orientable surface $S$, consider a collection $\Gamma$ of closed curves. The (geometric) intersection number $i_S(\Gamma)$ is the minimum number of self-intersections that a collection $\Gamma'$ can have, where $\Gamma'$ results from a continuous deformation (homotopy) of $\Gamma$. We provide algorithms that compute $i_S(\Gamma)$ and such a $\Gamma'$, assuming that $\Gamma$ is given by a collection of closed walks of length $n$ in a graph $M$ cellularly embedded on $S$, in $O(n \log n)$ time when $M$ and $S$ are fixed.

The state of the art is a paper of Despré and Lazarus [SoCG 2017, J. ACM 2019], who compute $i_S(\Gamma)$ in $O(n^2)$ time, and $\Gamma'$ in $O(n^4)$ time if $\Gamma$ is a single closed curve. Our result is more general since we can put an arbitrary number of closed curves in minimal position. Also, our algorithms are quasi-linear in $n$ instead of quadratic and quartic, and our proofs are simpler and shorter.

We use techniques from two-dimensional topology and from the theory of hyperbolic surfaces. Most notably, we prove a new property of the reducing triangulations introduced by Colin de Verdière, Despré, and Dubois [SODA 2024], reducing our problem to the case of surfaces with boundary. As a key subroutine, we rely on an algorithm of Fulek and Tóth [JCO 2020].
\end{abstract}

\section{Introduction}

\paragraph*{The problem}On a surface $S$, a collection $\Gamma$ of closed curves is in \emphdef{general position} if no point of $S$ is the image of more than two points of $\Gamma$, and if every self-intersection of $\Gamma$ is a crossing. In addition, $\Gamma$ is in \emphdef{minimal position} if no continuous deformation, in other words no homotopy, can decrease its number of crossings. The (geometric) \emphdef{intersection number} $i_S(\Gamma)$ is the number of crossings of a curve in minimal position homotopic to $\Gamma$. In this paper, we provide algorithms for the following problems: given a collection $\Gamma$ of closed curves on a surface $S$, compute $i_S(\Gamma)$, and construct a collection $\Gamma'$ in minimal position, homotopic to $\Gamma$. First, we review previous works.

\paragraph*{Previous works}The intersection number of closed curves on surfaces was first studied by the mathematical community. The problem of determining if a given closed curve is homotopic to a simple (injective) one was studied by Poincaré~\cite{p-ccal-04}, more recently by Reinhart~\cite{r-ajccs-62}, Chillingworth~\cite{c-sccs-69,c-wns2-72}, and Birman and Series~\cite{bs-ascs-84}. Later, Cohen and Lustig~\cite{cl-pggin-87} and Lustig~\cite{l-pggin2-87} (see also Hass and Scott~\cite{hs-ics-85}) determined the intersection number of one or two closed curves. De Graaf and Schrijver~\cite{DEGRAAF1997134}, and independently Hass and Scott~\cite{hass1994shortening}, found out that closed curves can be put in minimal position via homotopy moves that never increase the number of crossings. 

It emerges from these studies that, among the orientable surfaces, those of genus $g \geq 2$ without boundary constitute the hardest cases. Also, if a collection $\Gamma'$ of closed curves is in minimal position, then every closed curve in $\Gamma'$, and every pair of closed curves in $\Gamma'$ is also in minimal position~\cite[Section~1.2.4]{fm-pmcg-12}. Thus, computing $i_S(\Gamma)$ for an arbitrary $\Gamma$ boils down to computing $i_S(\Gamma)$ when $\Gamma$ contains at most two closed curves. However, an algorithm that can put one or two closed curves in minimal position would not necessarily extend to an arbitrary number of closed curves.

Recently, these problems have been revisited by computational topologists. The current state of the art is an algorithm provided by Despré and Lazarus~\cite{dl-cginc-19} in a technical paper of 49 pages. In the (popular) model they use, the input curves in $\Gamma$ are specified up to homotopy by a collection of closed walks $C$ in a graph $M$ \emphdef{cellularly embedded} on $S$, in that the faces of $M$ all have genus zero and at most one boundary component~\footnote{There exist standard data structures to represent combinatorial maps of cellular graph embeddings on orientable surfaces~\cite{e-dgteg-03,k-ugpdd-99}.}. They proved:

\begin{theorem}[Despré, Lazarus, 2019]\label{thm:despre-lazarus}
Let $M$ be a graph of size $m$ cellularly embedded on an orientable surface $S$. Let $C$ be a collection of either one or two closed walks of total length $n$ in $M$. One may compute $i_S(C)$ in $O(m + n^2)$ time. 
\end{theorem}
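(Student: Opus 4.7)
The plan is to handle the easy cases separately and then exploit the hyperbolic geometry of surfaces of genus $g \geq 2$. First I would dispose of the trivial cases: on the sphere every $C$ is null-homotopic so $i_S(C) = 0$; on the torus, every closed curve is homotopic to a linear geodesic encoded by a pair of integers, and intersection numbers reduce to $2 \times 2$ determinants; on a surface with boundary, $\pi_1(S)$ is free, so canonical forms can be computed by word reduction in linear time and intersection numbers read off directly. Thus the heart of the argument concerns a closed orientable $S$ of genus $g \geq 2$.

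For this case I would fix an auxiliary hyperbolic metric on $S$. The relevant classical fact is that every non-contractible closed curve is freely homotopic to a unique closed geodesic, and any collection of such geodesics is automatically in minimal position. Therefore $i_S(C)$ equals the number of transverse intersections of the geodesic representatives, and the algorithmic problem reduces to counting those.

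Concretely, I would preprocess $M$ in $O(m)$ time by choosing a spanning tree $T$ of $M$ based at some $v_0$, so that the loops formed by non-tree edges generate $\pi_1(S, v_0)$; for each generator, store the associated matrix in $\mathrm{PSL}(2, \cR)$. For each closed walk $c \in C$ of length $n_c$, rewrite it as a word $w_c$ in these generators in $O(n_c)$ time, then cyclically reduce. Each cyclic rotation of $w_c$ corresponds to a distinct lift of the geodesic representative of $c$ to the universal cover $\mathbb{H}^2$, viewed as a hyperbolic element of $\pi_1(S)$ together with its invariant axis. The self-intersection count of $c$ is then the number of unordered pairs of such lifts whose axes cross, and the mutual intersection count of two walks is the number of pairs of lifts, one from each, whose axes cross. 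There are $O(n^2)$ such pairs, and axis-crossing can be tested in $O(1)$ per pair by composing and inspecting matrices.

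The main obstacle is correctness: one must prove that axis-crossings in $\mathbb{H}^2$ enumerate exactly the geometric intersections of the closed geodesics on $S$, and that no crossing is double-counted after accounting for the action of the cyclic subgroup generated by $c$. This is where hyperbolic geometry does the real work, via the Freedman--Hass--Scott theorem that closed geodesics on negatively curved surfaces minimize (self-)intersections within their free homotopy class. A secondary obstacle is exact arithmetic: the matrix entries should be manipulated symbolically in a number field determined by the fixed hyperbolic structure, but since $S$ is fixed this adds only $O(1)$ overhead per comparison. Fitting everything into $O(m + n^2)$ time then reduces to a careful implementation of the cyclic-word enumeration and the axis-crossing primitive.
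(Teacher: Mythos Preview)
This theorem is not proved in the present paper; it is quoted as prior work of Despr\'e and Lazarus, so there is no in-paper argument to compare your attempt against. Evaluating your sketch on its own merits: the overall picture --- reduce to counting axis crossings among $O(n^2)$ pairs of lifts in $\mathbb{H}^2$ --- is indeed the geometric idea underlying this whole line of work, but the claimed $O(1)$ cost for the axis-crossing primitive does not hold. The matrix in $\mathrm{PSL}(2,\cR)$ attached to a cyclic rotation of a length-$n$ word is a product of $\Theta(n)$ generator matrices, so its entries have bit-size $\Theta(n)$ even when the generators lie in a fixed number field; the attracting and repelling fixed points on $\partial\mathbb{H}^2$ involve square roots of those entries, and deciding the cyclic order of four such points is an $\Omega(n)$-bit computation, not $O(1)$. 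Your remark that ``$S$ is fixed, so the number field is fixed'' controls the degree of the field but not the height of the numbers being compared, and the total cost of your scheme is at least $\Omega(n^3)$.

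This is exactly why Despr\'e--Lazarus (and Cohen--Lustig, Lustig, Birman--Series before them) do \emph{not} compute with actual hyperbolic coordinates. They replace the hyperbolic metric by a purely combinatorial surrogate --- a canonical cyclically reduced word relative to a one-vertex polygonal schema / system of quads --- in which the relative position of two lifts is determined by lexicographic comparison of two periodic edge-sequences, and that comparison is made $O(1)$ after suitable preprocessing of the word. Your sketch has the right geometry but omits this discretisation step, which is where the $O(m+n^2)$ bound actually comes from. Two smaller gaps: the surface-with-boundary case is not ``read off directly'' from free reduction --- unique cyclically reduced forms exist, but extracting intersection numbers from them still requires the same $O(n^2)$ pair analysis; and the assertion that the cyclic rotations of $w_c$ enumerate, without omission or repetition, a set of lifts whose pairwise crossings count $i_S(c)$ needs a proof (it is true for a canonical representative, but you have not said which representative you take or why it has this property).
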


When $S$ has negative Euler characteristic, Despré and Lazarus also provide an algorithm to compute a closed curve $\gamma'$ in minimal position, homotopic to a given closed walk $C$. They derive from $M$ a quadrangulation $Q$, and return $\gamma'$ as an ``infinitesimal perturbation'' of a closed walk $C'$ in $Q$: in this paper, we say that $\gamma'$ is a \emphdef{perturbation} of $C'$, and we leave this notion informal for now. They proved:

\begin{theorem}[Despré, Lazarus, 2019]\label{thm:despre-lazarus-2}
Let $M$ be a graph of size $m$ cellularly embedded on an orientable surface $S$ of negative Euler characteristic~\footnote{Despré and Lazarus~\cite[Theorem~2]{dl-cginc-19} implicitly assume that the Euler characteristic of the surface is negative. Indeed, their algorithm computes a ``system of quads''~\cite[Section~8]{dl-cginc-19}, a quadrangulation defined by them only when the Euler characteristic is negative~\cite[Section~4.2]{dl-cginc-19}.}. Let $C$ be a closed walk of length $n$ in $M$. One may construct in $O(m + n^4)$ time a quadrangulation $Q$, a closed walk $C'$ of length $O(n)$ in $Q$, homotopic to $C$, and a perturbation of $C'$ with $i_S(C)$ self-crossings.
\end{theorem}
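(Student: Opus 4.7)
The plan is to produce the quadrangulation $Q$ as a suitable refinement of $M$ (a \emph{system of quads}) in which closed walks admit a canonical perturbation, convert $C$ into a closed walk $C'$ of length $O(n)$ in $Q$ homotopic to $C$, and then iteratively apply local homotopy moves that strictly decrease the number of self-crossings of the perturbation until it reaches $i_S(C)$. Building $Q$ consists of inserting, inside each face of $M$, a single vertex joined to the face's corners, and subdividing so that every resulting face is a quadrilateral with a distinguished pair of opposite sides along which strands of curves are allowed to cross; this takes $O(m+n)$ time. Tracing $C$ through $Q$ edge by edge yields $C'$ in $O(n)$ time. The perturbation is obtained by pushing $C'$ off the $1$-skeleton inside each quad: two strands entering the same quad produce a crossing precisely when they use opposite sides, so the self-crossings of the perturbation correspond to combinatorial ``crossing pairs'' recordable along $C'$.

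The core combinatorial tool is the Hass--Scott / de~Graaf--Schrijver theorem: on a surface of negative Euler characteristic, a generic closed curve that is not in minimal position bounds an immersed monogon or bigon, and a homotopy supported on a neighbourhood of such a disk strictly decreases the number of self-crossings. Because $S$ has negative Euler characteristic, this reduction terminates exactly when the curve is in minimal position. In the system of quads, monogons and bigons are witnessed by short subwalks of $C'$ whose two boundary arcs traverse a common sequence of quads using compatible sides; these can be detected by examining pairs of crossing positions along $C'$ together with the combinatorial data of $Q$.

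For the running time, first invoke Theorem~\ref{thm:despre-lazarus} to compute $i_S(C)$ in $O(m+n^2)$, so the algorithm knows when to stop. The initial perturbation of $C'$ has $O(n^2)$ self-crossings, giving at most $O(n^2)$ outer iterations. Each iteration searches for an innermost bigon or monogon by inspecting all $O(n^2)$ pairs of crossing positions and checking, for each candidate, whether the bounded disk is combinatorially realised in $Q$; this check takes $O(1)$ amortised after an $O(n)$ preprocessing per iteration. Performing the elimination is a local modification of $C'$ in $O(1)$ time. The total cost is $O(m + n^2) + O(n^2)\cdot O(n^2) = O(m+n^4)$.

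The hard part is designing the system of quads so that (i) every singular bigon or monogon of the perturbation is actually realised by a combinatorially recognisable subwalk of $C'$ in $Q$, and (ii) the length of $C'$ stays $O(n)$ throughout the process despite the rerouting induced by bigon eliminations. Property (i) is what makes the local search complete, and it forces the quad structure to be rich enough to ``see'' all relevant disks; property (ii) requires an amortised argument showing that each move can be realised by a rerouting that replaces a subwalk by one of no greater length, exploiting the fact that the two boundary arcs of a bigon in $Q$ traverse the same sequence of quads. Reconciling these two requirements with the need to define a well-behaved perturbation is the most delicate component of the proof.
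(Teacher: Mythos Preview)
The paper does not contain a proof of this statement: Theorem~\ref{thm:despre-lazarus-2} is quoted as prior work of Despr\'e and Lazarus and is only cited, not reproved. There is therefore nothing in the paper to compare your attempt against.

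As for the proposal itself, it is a sketch rather than a proof, and the parts you flag as ``the hard part'' are precisely where the argument is missing. You assert properties (i) and (ii) but do not establish them: (i) that every immersed monogon or bigon of the perturbation is combinatorially detectable as a subwalk pattern in the quad system, and (ii) that the length of $C'$ remains $O(n)$ after an unbounded sequence of bigon eliminations. Neither follows from the generic Hass--Scott / de~Graaf--Schrijver statement you invoke; that theorem guarantees the existence of a singular bigon on the surface, not that it is realised by a short, locally recognisable pattern in your specific quad decomposition, nor that removing it can always be done by a length-nonincreasing rerouting. The $O(1)$-amortised detection per candidate pair after $O(n)$ preprocessing is also unjustified. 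In the actual Despr\'e--Lazarus paper these points consume most of the technical effort (their canonical form in the system of quads, their characterisation of excess crossings via ``double points'', and the unimodular structure that controls length under moves); your outline names the right ingredients but does not supply any of the arguments that make them work.
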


We insist that Theorem~\ref{thm:despre-lazarus-2} does not cope with more than one closed walk. Although Despré and Lazarus do not mention it, their output can easily be turned by isotopy into a perturbed closed walk of length $O(mn)$ in $M$ (instead of $Q$), at an additional cost of $O(mn)$ time.

A simpler problem is to consider only the perturbations of the input collection $C$ of closed walks, instead of searching over all the collections $C'$ homotopic to $C$, and to look for one with minimum self-crossing. This problem was studied by Fulek and Tóth~\cite{fulek2020crossing} when the closed walks in $C$ have no \emph{spur}, that is, when they never take an edge of $M$ and its reversal consecutively. They proved:

\begin{theorem}[Fulek, Tóth, 2020]\label{fulek}
Let $M$ be an embedded graph of size $m$. Let $C$ be a collection of closed walks of length $n$ in $M$, without spur. One may construct a minimal perturbation of $C$ in $O(m + n \log n)$ time.
\end{theorem}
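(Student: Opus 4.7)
The plan is to reduce the problem to a local crossing-minimization at each vertex of $M$ and solve each subproblem via inversion counting.

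First, I would thicken $M$: each vertex $v$ becomes a small closed disk $D_v$, and each edge $e$ becomes a thin strip $S_e$ joining two such disks. A perturbation of $C$ is then determined by the ordering of the strands (edge-traversals) inside each strip and the embedding of the arcs (vertex-visits) inside each disk. Self-crossings split accordingly into strip crossings and disk crossings. Using the spur-free hypothesis, one can choose the strand ordering inside each strip to avoid strip crossings, so it remains to minimize the total number of arc crossings over all disk embeddings.

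Second, I would analyse the subproblem inside a single disk $D_v$. Its boundary is cyclically divided into $\deg(v)$ sub-arcs, one per incident edge-end in the order given by the combinatorial map, and each visit to $v$ yields an arc joining two prescribed sub-arcs, with endpoint positions within each sub-arc free to choose. A direct case analysis of pairs of arcs shows that two arcs are forced to cross exactly when the four sub-arcs touched by their endpoints are pairwise distinct and interleave cyclically; in all other cases the pair can be drawn disjointly by a suitable local ordering. Hence the minimum number of crossings inside $D_v$ equals the cyclic interleaving count of the arcs at $v$. Linearising the cyclic sequence of sub-arc endpoints and encoding each arc by its pair of positions turns this into a standard inversion count, solvable by a merge-sort sweep in $O(k_v \log k_v)$ time, where $k_v$ is the number of visits to $v$; the merge recursion simultaneously yields explicit optimal within-sub-arc orderings. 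Summing via $\sum_v k_v = 2n$ gives $O(n \log n)$ on top of the $O(m)$ needed to read the embedding.

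The main obstacle will be showing that the per-vertex optima can be realised simultaneously. For an edge $e = uv$, the ordering of strands chosen along $e$ determines, by reflection through $S_e$, the within-sub-arc ordering at both $u$ and $v$, and one must argue that a single strand ordering along $e$ is compatible with an optimal arc embedding inside both $D_u$ and $D_v$. This should follow from a compatibility lemma stating that whenever a pair of arcs sharing the edge $e$ is not forced to cross at either endpoint, there exists an ordering of the strands along $e$ avoiding their crossing at both ends at once. The spur-free hypothesis is essential here to exclude the degenerate configuration in which both endpoints of a single arc lie on the sub-arc of $e$ at a common vertex, which would leave no freedom for the local ordering.
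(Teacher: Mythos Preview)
Your reduction to per-vertex problems and the identification of the interleaving count as a lower bound are both correct, but the lower bound is not tight, and your compatibility lemma cannot close the gap. Take $M$ to be a cycle on three vertices (so the patch system $\Sigma$ is an annulus) and let $C$ be the closed walk that goes around the cycle twice. This walk is spur-free. At each vertex the two arcs run between the same pair of sub-arcs, so no pair interleaves and your lower bound is zero at every vertex. Yet a closed curve winding twice around the core of an annulus cannot be made simple; its minimum number of self-crossings is one. Hence your per-vertex minima are not simultaneously achievable, and the interleaving count undercounts the true minimum.

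The reason your compatibility lemma fails to rescue the argument is that it is a statement about a single pair of strands along a single edge, while realizability is a global constraint. In the example above, each edge carries two strands, and at each vertex the non-crossing condition is an XOR-type constraint between the two incident edge orderings; going around the $3$-cycle these constraints compose to an inconsistency, forcing at least one crossing somewhere. More generally, whenever the closed walks contain a nontrivial power of a primitive cycle in the sub-arc structure, such cyclic obstructions arise. The paper's proof does not attempt a closed-form count at all: it iteratively refines the arrangement by a \emph{split} operation (shown to preserve validity) until the graph of sub-arcs decomposes into disjoint cycles, and then treats each cycle exactly as a collection of powers of a single primitive curve, drawing them as in Figure~\ref{fig:non-primitive-curves}. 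That terminal step is precisely where the extra crossings you are missing get accounted for; the $O(n\log n)$ bound comes not from inversion counting but from a heavy-child argument bounding the total work of all splits.
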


The result of Fulek and Tóth has a different statement in their paper~\cite[Theorem~1]{fulek2020crossing}. In particular, the embedded graph lies in the plane, and only one closed walk is given as input. Nevertheless, Theorem~\ref{fulek} follows from their work (see Appendix~\ref{app:fulek}).

The problem of deciding whether a drawing of a graph on an orientable surface can be untangled, in other words, whether it is homotopic to an injective map, was studied by Colin de Verdière, Despré, and Dubois~\cite{untangling-graphs}. They provided a polynomial time algorithm. Along the way, they introduced \emphdef{reducing triangulations}. Every orientable surface of genus $g \geq 2$ without boundary admits a reducing triangulation. A triangulation is reducing if its vertices all have degree greater than or equal to eight, and if its dual graph is bipartite.  Reducing triangulations support \emphdef{reduced walks} and \emphdef{reduced closed walks}. Those are unique among their homotopy class, are stable upon subwalk and reversal, and can be computed in linear time from any given (closed) walk. Informally, they are convenient discrete analogs of geodesics on hyperbolic surfaces.

\paragraph*{Our results} We simplify, improve, and generalize the results of Despré and Lazarus. Naturally, we use the same model for the input: a collection $C$ of closed walks in a graph $M$ cellularly embedded on an orientable surface $S$. Our presentation focuses on the surfaces of genus $g \geq 2$ without boundary, as they constitute the hardest cases, but we prove similar results (with improved complexities and simpler proofs) for the surfaces with boundary and the torus. The case of the sphere is trivial. Here is the main result of the paper:

\begin{theorem}\label{thm:main-without-boundary}
Let $M$ be a graph of size $m$ cellularly embedded on an orientable surface $S$ of genus $g \geq 2$ without boundary. Let $C$ be a collection of closed walks of length $n$ in $M$. One may compute $i_S(C)$ in $O(m + g^2 + gn\log(gn))$ time. One may construct in $O(g^2 mn + gn \log gn)$ time a collection $C'$ of closed walks of length $O(g^2mn)$ in $M$, homotopic to $C$, and a perturbation of $C'$ with $i_S(C)$ self-crossings .
\end{theorem}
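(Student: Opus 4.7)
The plan is to replace the input embedded graph $M$ by a reducing triangulation $T$ of $S$, use its canonical reduced closed walks as discrete geodesic-like representatives of the input homotopy classes, exploit a new structural property of $T$ to pass to a surface with boundary, and finally invoke Theorem~\ref{fulek}. Concretely, I first build a reducing triangulation $T$ of $S$ (such a triangulation exists on every closed orientable surface of genus $g\geq 2$, and it has size $\Theta(g)$, so this costs $O(g^2)$ time). Using the embeddings of $M$ and $T$ on $S$, I convert the input collection $C$ of closed walks in $M$ into a homotopic collection in $T$, then replace it by its reduced form $\tilde C$ using the linear-time reduction algorithm. By the properties of reducing triangulations recalled in the introduction, $\tilde C$ is unique in its homotopy class, is stable under subwalk and reversal, and contains no spur---precisely the hypothesis required by Theorem~\ref{fulek}. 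Careful bookkeeping makes this step run in $O(m + gn)$ time, producing walks of total length $O(gn)$ in $T$.

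\textbf{The new property: cutting to a surface with boundary.} The crucial contribution is a new property of reducing triangulations that lets one cut $S$ along a 1-subcomplex of $T$ disjoint from every reduced closed walk, producing a surface $S^\circ$ with boundary in which $\tilde C$ still embeds and has the same intersection number: $i_S(\tilde C) = i_{S^\circ}(\tilde C)$. The heuristic is that reduced walks behave like discrete geodesics, while the minimum-degree-eight hypothesis at each vertex combined with bipartiteness of the dual graph forces any reduced collection to globally avoid a suitable triangle (or more generally a disk), whose interior can then be excised without altering the crossing pattern. Formulating and proving this lemma, and arguing that it implies the preservation of the geometric intersection number when passing from $S$ to $S^\circ$, is the main obstacle; everything else is bookkeeping.

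\textbf{Applying Fulek--Tóth and lifting back to $M$.} Once we are on $S^\circ$, the reduced walks $\tilde C$ live in an embedded graph of size $O(g)$ and have total length $O(gn)$ with no spur, so Theorem~\ref{fulek} yields a minimal perturbation of $\tilde C$ in $O(g + gn\log(gn))$ time, whose crossing count equals $i_S(C)$; together with the previous steps this gives the claimed $O(m + g^2 + gn\log(gn))$ bound for computing $i_S(C)$. To output a perturbed collection $C'$ supported on $M$ rather than on $T$, I replace each edge of $T$ traversed by the perturbed walks by the corresponding path in $M$ obtained from the overlay constructed in the first step; controlling the length of these paths in terms of $m$ and $g$ yields a closed collection $C'$ of length $O(g^2 m n)$ in $M$, constructible in the claimed $O(g^2 m n + gn \log gn)$ time. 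The cases of the torus and of surfaces with boundary would be handled by bypassing the cutting step altogether and feeding $\tilde C$ directly into Theorem~\ref{fulek}.
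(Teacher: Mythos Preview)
Your overall architecture is right: convert to a reducing triangulation, pass to reduced closed walks, invoke Fulek--T\'oth, and pull the result back to $M$. The first and last paragraphs broadly match what the paper does (though the overlay between $M$ and $T$ needed for the pull-back is not a byproduct of the forward conversion and requires genuine extra work via a canonical system of loops and an explicit embedding of $T$ against a quadrangulation of $M$).

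The genuine gap is in your middle paragraph, where you have misidentified the key structural property. You propose to find a $1$-subcomplex of $T$ (or a single triangle) avoided by every reduced closed walk and cut there. This fails on two counts. First, there is no reason such a subcomplex exists: a collection of reduced closed walks can perfectly well traverse every edge of $T$; the degree-$\geq 8$ and dual-bipartite conditions do not force any edge or face boundary to be globally avoided. Second, even if you could excise a single disk, that would not suffice: the output of Fulek--T\'oth is a minimal \emph{perturbation}, and by Lemma~\ref{lem:perturbations} its number of crossings equals $i_\Sigma(\tilde C)$ where $\Sigma$ is the \emph{patch system} of $T$, i.e.\ the surface obtained by removing an open disk from \emph{every} face of $T$. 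Removing a single disk leaves all other faces as disks, so the crossing count you get is still $i_\Sigma$, not $i_{S^\circ}$; your detour through $S^\circ$ buys nothing.

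The actual property you need is Proposition~\ref{prop:redux}: for reduced closed walks $\tilde C$ in $T$, one has $i_S(\tilde C) = i_\Sigma(\tilde C)$ where $\Sigma$ is the full patch system. This is what bridges Lemma~\ref{lem:perturbations} (minimal perturbation has $i_\Sigma$ crossings) with the target $i_S$. The reduced walks already live in $\Sigma$ since they lie on the $1$-skeleton; the nontrivial direction is $i_\Sigma \leq i_S$, and the paper proves it by endowing the interior of $\Sigma$ with a complete hyperbolic metric for which the arcs are geodesics, taking geodesic representatives of the primitive roots of the walks, and using the uniqueness of reduced walks (Lemma~\ref{lem:unicity}) to show that distinct lifts in the universal cover of $S$ cross at most once. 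This hyperbolic-geometry argument is the heart of the matter, and it is not about any avoided region.
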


We even deduce from Theorem~\ref{thm:main-without-boundary} an algorithm of improved time complexity by allowing for a more compact representation of the output curve (see Corollary~\ref{cor:main-corollary}), such as when Despré and Lazarus return in a quadrangulation instead of $M$. Compared to Despré and Lazarus, our result is more general since we can put an arbitrary number of closed curves in minimal position. Also, our algorithms are quasi-linear in $n$ instead of quadratic and quartic, and our proofs are simpler and shorter as we benefit from two recent tools that were not available to them: the result of Fulek and Tóth, and the reducing triangulations of Colin de Verdière, Despré, and Dubois. We highlight the relationship between our problem and those tools. 

\paragraph*{Overview of the paper}After some preliminaries in Section~\ref{sec:preliminaries}, our starting point is the result of Fulek and Tóth, Theorem~\ref{fulek}. We observe (see Lemma~\ref{lem:perturbations}) that in the setting of Theorem~\ref{fulek}, if every face of $M$ contains a boundary component of $S$, then the returned perturbation of $C$ has exactly $i_S(C)$ self-crossings (not more). Motivated by this observation, we try to reduce our problem to an application of Theorem~\ref{fulek}. The difficulty is that, while spurs can trivially be eliminated from the input closed walks, the faces of $M$ do not generally contain a boundary component of $S$. We overcome this difficulty by proving in Section~\ref{sec:redux} the following new property of reducing triangulations:

\begin{restatable}{proposition}{keyprop}\label{prop:redux}
Let $T$ be a reducing triangulation of an orientable surface $S$ of genus $g \geq 2$ without boundary. Let $\Sigma$ be the surface obtained from $S$ by removing an open disk from each face of $T$. Let $C$ be a collection of reduced closed walks in the 1-skeleton of $T$. Then $i_S(C) = i_{\Sigma}(C)$.
\end{restatable}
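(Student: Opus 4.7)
The plan is to prove the two inequalities $i_S(C) \leq i_\Sigma(C)$ and $i_\Sigma(C) \leq i_S(C)$ separately. The first is immediate: any multicurve on $\Sigma$ realizing $i_\Sigma(C)$ lies in $\Sigma \subset S$ and is still $S$-homotopic to $C$, so $i_S(C) \leq i_\Sigma(C)$. The real work is in the reverse direction.

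For $i_\Sigma(C) \leq i_S(C)$, I would exhibit a single multicurve $C_\epsilon$ that is $\Sigma$-homotopic to $C$ and satisfies $|C_\epsilon| = i_S(C)$. The natural candidate is a generic infinitesimal perturbation of $C$ into a thin tubular neighborhood of the 1-skeleton $T^{(1)}$, resolving every visited vertex of $T$ into a family of parallel strands. Since $T^{(1)}$ is a deformation retract of $\Sigma$, this $C_\epsilon$ is $\Sigma$-homotopic to $C$, so $i_\Sigma(C) \leq |C_\epsilon|$. It then suffices to prove that $C_\epsilon$ is in minimal position on the \emph{full} surface $S$, in which case $|C_\epsilon| = i_S(C)$ and the inequality follows.

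Minimality on $S$ would come from the bigon criterion: on a surface of negative Euler characteristic, a general-position multicurve is in minimal position iff it bounds no embedded bigon and no null-homotopic monogon. Assume for contradiction that a bigon $D \subset S$ exists for $C_\epsilon$. Undoing the local perturbation, $\partial D$ becomes two subwalks of reduced walks in $T^{(1)}$ meeting at two corners, and $D$ inherits a cellulation from $T$. I would apply combinatorial Gauss--Bonnet: assigning angle $\pi/4$ at each corner of each triangle of $T$ makes every 2-cell carry zero discrete curvature, and every interior vertex of $T$ carries curvature $2\pi - (\deg)\cdot\pi/4 \leq 0$ by the reducing condition $\deg \geq 8$. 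Because $C$ is reduced, each vertex of $T$ along $\partial D$ other than the two bigon corners receives a strictly negative boundary curvature contribution (reducedness forces a definite amount of turning on the side of $D$). Summing all contributions gives a total strictly less than $2\pi\chi(D)=2\pi$, contradicting Gauss--Bonnet.

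The main obstacle will be the discrete Gauss--Bonnet bookkeeping: quantifying precisely the minimum turning imposed by the reduced condition at each interior boundary vertex, accounting correctly for the two corners of the bigon (whose interior angles in $D$ are at most $\pi$), and handling in the same manner the null-homotopic monogon case (which uses that reduced walks are essential) and bigons between two distinct components of $C$. Conceptually, the argument is the combinatorial analog of the classical hyperbolic fact that two intersecting geodesics cannot bound a bigon, with the reducing conditions (large vertex degrees together with reducedness of the walks) playing the role of negative curvature.
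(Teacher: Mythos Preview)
Your strategy---pick a perturbation $C_\epsilon$ of $C$ and rule out bigons via combinatorial Gauss--Bonnet---is genuinely different from the paper's, but as written it has a real gap. The perturbation $C_\epsilon$ is never actually specified, and a \emph{generic} perturbation need not be bigon-free on $S$. Concretely, whenever two reduced closed walks in $C$ (or two passages of the same walk) share a maximal common subpath $P$ in the $1$-skeleton of $T$, any perturbation places the two strands along $P$ in some transverse order; for the wrong order they cross at both ends of $P$, and those two crossings bound a thin bigon. Your Gauss--Bonnet argument cannot exclude this bigon: after ``undoing the perturbation'' its two sides coincide with $P$, so it encloses no triangle of $T$ and carries no negative curvature (its two corners each contribute close to $\pi$, summing to the required $2\pi$). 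Arranging the strand orders consistently so that no thin bigon ever occurs is precisely the non-trivial combinatorial problem; it does not follow from reducedness.

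There is also a problem with the curvature bookkeeping for the ``fat'' bigons. With corner angle $\pi/4$ the face curvature of a triangle is $3\pi/4-\pi=-\pi/4$, not zero. Switching to $\pi/3$ makes the faces flat, but then a non-corner boundary vertex at which the walk leaves only two triangles on the interior side has interior angle $2\pi/3$ and hence \emph{positive} boundary curvature $\pi/3$. The definition of reduced explicitly allows this (two triangles on one side, provided the first one encountered is blue), so your claim that every non-corner boundary vertex contributes strictly negatively is false as stated. If a Gauss--Bonnet argument can be made to work here, it must exploit the red/blue bipartition in the angle assignment and is not the routine verification you describe.

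The paper avoids both issues by not perturbing $C$ at all. It endows the interior of $\Sigma$ with a complete hyperbolic metric in which the arcs of $\Sigma$ are complete geodesics, replaces each primitive reduced closed walk by its closed geodesic in $\Sigma$, and then shows---using only the uniqueness of reduced walks (Lemma~\ref{lem:unicity})---that in the universal cover of $S$ the lifts of these geodesics are simple and cross pairwise at most once. Minimal position on $S$ follows from the lift criteria (Lemmas~\ref{lem:minimal-position-2} and~\ref{lem:minimal-position-1}), and non-primitive walks are handled via Lemma~\ref{lem:degraaf}. Passing to geodesics is exactly what kills the thin-bigon phenomenon for free.
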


Proposition~\ref{prop:redux} is largely inspired by~\cite[Proposition~4.3]{untangling-graphs}. In Section~\ref{sec:theorems}, we combine Proposition~\ref{prop:redux} with the result of Fulek and Tóth (Theorem~\ref{fulek}) to prove our main theorem, Theorem~\ref{thm:main-without-boundary}. Essentially, we observe that Theorem~\ref{thm:main-without-boundary} is straightforward (with improved complexities even) when $M$ is a reducing triangulation, and we handle the other cases with conversions between models adapted from~\cite[Section~7]{untangling-graphs} (informally, we transform $M$ into a reducing triangulation $T$, push $C$ by homotopy in $T$, compute in $T$, and pull back the result in $M$). In the same section, we handle the surfaces with boundary and the torus. Again, we reduce to an application of Theorem~\ref{fulek}, but reducing triangulations are not needed anymore. For the surfaces with boundary, we only perform the conversions between models adapted from~\cite[Section~8]{untangling-graphs}. For the torus, we define a particular kind of closed walks on a one-vertex embedded graph, for which we can prove a property similar to Proposition~\ref{prop:redux} (see Proposition~\ref{prop:key-torus}).

\section{Preliminaries}\label{sec:preliminaries}

We assume some familiarity with basic notions of graph theory, and of the topology of surfaces. In this paper, every surface has finite type and is orientable, without further mention.

\subsection{Lifting closed curves in the universal cover of a surface}

We will use standard notions of the theory of covering spaces, applied to surfaces. For more details, see, e.g.,~\cite[Section~10.4]{basic-topology}. 

Let $S$ be a surface of genus $g \geq 1$ without boundary. The \emphdef{universal cover} of $S$ is a surface $\widetilde S$ homeomorphic to an open disk, equipped with a local homeomorphism $\pi : \widetilde S \to S$. A \emphdef{lift} of a point $x \in S$ is a point $\widetilde x \in \widetilde S$ such that $\pi(\widetilde x) = x$. Given a closed curve $c : \cR/\cZ\to S$, let $c':\cR\to S$ be such that $c'(t)=c(t \text{ mod } 1)$.  A \emphdef{lift} of $c$ is a map $\widetilde c : \cR \to \widetilde S$ such that $\pi \circ \widetilde c = c'$. Given any $t \in \cR$ and any lift $\widetilde x$ of $c'(t)$, there is a lift $\widetilde c$ of $c$ that satisfies $\widetilde c(t) = \widetilde x$. We usually identify two lifts $\widetilde c, \widetilde d : \cR \to \widetilde S$ of the same closed curve if they differ by an integer translation, that is if there is $k \in \cZ$ such that $\widetilde c(t) = \widetilde d(t +k)$ for every $t \in \cR$. 

\subsection{Hyperbolic surfaces}\label{sec:background-hyperbolic}

We will use standard notions of two-dimensional hyperbolic geometry. See, e.g., Farb and Margalit~\cite[Chapter~1]{fm-pmcg-12}, also Cannon, Floyd, Kenyon, and Parry~\cite{cfkp-hg-97}. 

A \emphdef{hyperbolic surface} is a metric surface (here understood as a real, smooth two-dimensional manifold equipped with a Riemannian metric) locally isometric to the hyperbolic plane. Some hyperbolic surfaces can be constructed by considering \emphdef{ideal} hyperbolic polygons, the sides of which are complete (thus, of infinite length) geodesics, pairing the sides of the polygons, and identifying every two paired sides in a way that respects the orientations of the polygons. The topology of the resulting surface is that obtained by \emphdef{puncturing} (removing points from) a compact surface without boundary.

These surfaces enjoy the following specific properties: (1) There is a unique geodesic path homotopic to any given path; (2) there is a unique geodesic closed curve freely homotopic to any given closed curve, provided that curve is non-contractible and not homotopic to a neighborhood of a puncture.

\subsection{Reducing triangulations}

This section closely follows the presentation of~\cite[Section~3]{untangling-graphs}. A triangulation $T$ is reducing if its vertices have degree greater than or equal to eight, and if its dual graph is bipartite. Every surface of genus $g \geq 2$ without boundary admits a reducing triangulation. Reducing triangulations support \emphdef{reduced walks} and \emphdef{reduced closed walks}. We shall not need their definitions, but we give an informal account for completeness. Color the triangles of $T$ red and blue so that adjacent triangles receive distinct colors. In the 1-skeleton of $T$, a walk makes a bad turn if it uses an edge and its reversal consecutively. The walk also makes a bad turn if when going over an internal vertex it leaves only one triangle on its left, or if it leaves two triangles on its left but the first one encountered is red. Finally, a walk makes a bad turn if its reversal makes a bad turn. A walk is reduced if it makes no bad turn. A closed walk $C$ is reduced if $C$ makes no bad turn, with the following exception. If, up to reversing it, $C$ leaves three triangles on its left at every vertex, the middle one being blue, then $C$ is not reduced as a closed walk. Clearly, the reversal, and any subwalk of a reduced (closed) walk is reduced. Importantly:

\begin{lemma}[\cite{untangling-graphs}, Proposition~3.1 and Proposition~3.4]\label{lem:unicity}
In a reducing triangulation $T$, any two homotopic reduced walks are equal. Also, any two freely homotopic, non-contractible, reduced closed walks are equal. 
\end{lemma}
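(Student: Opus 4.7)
The plan is to work in the universal cover $\widetilde{T}$ and apply a combinatorial Gauss-Bonnet argument that uses both the minimum-degree-eight hypothesis and the bipartite coloring of faces. The reducing conditions should translate into strict ``negative combinatorial curvature'' in the interior and into a controlled ``straight boundary'' for reduced walks, exactly the setup in which a disk or annulus cannot close up.

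For two homotopic reduced walks $W_1 \neq W_2$ in $T$, I would lift them to walks $\widetilde{W_1}, \widetilde{W_2}$ in $\widetilde{T}$ starting at a common lift of their common starting vertex; homotopy rel endpoints forces the other endpoints of the lifts to coincide as well. Since $\widetilde{T}$ is a topological disk, the concatenation $\widetilde{W_1} \cdot \widetilde{W_2}^{-1}$ bounds a finite 2-dimensional subcomplex, and by passing to an innermost subregion I may assume this subcomplex is a topological disk $D$ meeting $\widetilde{W_1}$ and $\widetilde{W_2}$ only along two arcs that share only their endpoints (an edge-bigon). I would then evaluate a combinatorial Gauss-Bonnet identity on $D$: with curvature $\kappa(v) = 1 - (\deg_D v)/6$ at interior vertices and a suitable angular contribution at each boundary vertex determined by how many triangles of $D$ lie locally on the inner side of the boundary walk, one gets $\sum_v \kappa(v) = \chi(D) = 1$. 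The minimum-degree-eight hypothesis yields $\kappa(v) \leq -1/3$ at interior vertices. At each non-corner boundary vertex, the no-bad-turn condition forces at least two triangles of $D$ to lie on the appropriate side of the boundary walk, and the bipartite color rule on the first such triangle pushes the boundary contribution down to $\leq 0$. The two corners of $D$ together contribute at most $1$, and a case check squeezes the sum strictly below $1$, contradicting the identity.

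For two freely homotopic non-contractible reduced closed walks $C_1, C_2$, the same argument runs on an annulus. Their lifts in $\widetilde{T}$ share the axis of a common deck transformation $\tau$; a $\tau$-invariant region between these two lifts projects to a triangulated annulus $A$ with $\chi(A) = 0$. Gauss-Bonnet on $A$ combined with the same estimates forces $A$ to contain no interior vertex and every boundary vertex to contribute zero curvature. The only configuration compatible with this, under the reduced-walk rules, is the degenerate one in which $A$ has empty interior, i.e., $C_1 = C_2$. The ``exception'' carved out in the definition of a reduced closed walk (the case where the walk leaves three triangles on its left at every vertex, the middle one being blue) is precisely the configuration that would otherwise admit a nontrivial flat annulus tiled by such vertex stars; excluding it by definition is what makes the uniqueness go through.

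The hardest step will be pinning down the boundary-curvature accounting so that the strict inequality really emerges. One must check that the interplay of the minimum-degree-eight condition, the ``two triangles on each side'' requirement, and the red/blue rule forces the total boundary curvature contribution to stay below what the disk (respectively the annulus) topology demands, with at least one term strictly negative. The asymmetry of the coloring condition in the definition of ``bad turn'' is tailored precisely for this estimate, and translating it into exact numerical coefficients at corners and along the boundary is where the proof needs the most care.
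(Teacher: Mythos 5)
Note first that the paper you are working from does not prove this lemma at all: it is imported verbatim from the reference on reducing triangulations (Propositions~3.1 and~3.4 of the cited work), so there is no in-paper argument to compare against; your proposal has to stand on its own as a proof of the cited result.

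As it stands it does not: the decisive step is missing. Your outline (lift to the universal cover, extract an innermost bigon or a deck-transformation-invariant annulus, apply combinatorial Gauss--Bonnet with degree~$\geq 8$ giving strictly negative interior curvature) is a sensible strategy, but the entire content of the lemma sits in the boundary accounting that you explicitly defer, and the one concrete estimate you do state is numerically wrong. With the natural normalization (angle $\pi/3$ per triangle corner, so an interior vertex contributes $2\pi-\deg(v)\,\pi/3\leq -2\pi/3$ and a non-corner boundary vertex with $t$ inner triangles contributes $\pi-t\pi/3$), having ``at least two triangles of $D$ on the inner side'' gives a contribution of $+\pi/3$, not $\leq 0$; you need three triangles for nonpositivity. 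The definition of a reduced walk allows exactly the configuration of two inner triangles whose first is blue, so positive boundary terms genuinely occur, and the proof must show how the bipartite coloring forces these $+\pi/3$ terms to be compensated (a discharging or pairing argument across consecutive boundary vertices, which is precisely where ``dual graph bipartite'' is used). Until that bookkeeping is done, the contradiction $\sum\kappa < 2\pi\chi$ is not established, in either the disk or the annulus case. Two further points need attention: the reduction to an embedded innermost bigon presupposes that lifts of reduced walks are simple and intersect tamely, which either requires a preliminary monogon argument of the same kind or a reformulation in terms of (reduced) disk diagrams rather than subcomplexes of $\widetilde{T}$; and in the annulus case the assertion that ``zero curvature everywhere forces empty interior'' must be argued together with the exceptional clause in the definition of reduced \emph{closed} walks --- you correctly identify that the exception exists to kill the flat annulus, but identifying the flat configurations and showing the exception excludes exactly those is itself a case analysis that the proposal does not carry out.
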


\begin{lemma}[\cite{untangling-graphs}, Proposition~3.7]\label{lem:reduce}
Let $C$ be a closed walk of length $n$ in a reducing triangulation $T$. One may compute a reduced closed walk freely homotopic to $C$ in $O(n)$ time.
\end{lemma}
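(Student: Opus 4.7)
The plan is to mimic the classical ``Dehn reduction'' strategy adapted to the combinatorial setting of a reducing triangulation: repeatedly identify a position in the walk that witnesses a bad turn, apply a local homotopy move that removes it, and continue until no bad turn remains. The notion of a bad turn is purely local, involving a single vertex of $T$ and a bounded number of incident triangles. Because $T$ is reducing (every vertex has degree at least eight and the dual is bipartite), the combinatorics at each vertex is rich enough that, whenever the walk leaves one triangle on one side, or two triangles on one side starting with a red triangle, there is a canonical replacement subwalk of $O(1)$ edges going around the offending triangle(s). Each such move is a local free homotopy and therefore preserves the free homotopy class of $C$.

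Concretely, I would represent the current walk as a doubly-linked circular list of directed edges, augmented with a FIFO queue of list nodes at which a bad turn might currently occur. Initially, every node is enqueued. Processing a node takes $O(1)$ time: inspect the incident vertex and the neighboring triangles (which requires only $O(1)$ local look-ups in the triangulation's combinatorial data structure), and if a bad turn is detected, splice in the $O(1)$ replacement edges and enqueue the $O(1)$ positions adjacent to the change, since a move can only create new bad turns at bounded distance from the edit site. The case of a spur (an edge immediately followed by its reversal) is handled identically, simply deleting the two offending edges.

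For the $O(n)$ running time, I would argue via a potential function. Each local move strictly decreases a bounded-valued, nonnegative quantity, for instance the lexicographic pair (length of the walk, number of bad turns), and the replacement subwalk is never longer than the subwalk it replaces by more than an additive constant absorbed in the measure. Since by Lemma~\ref{lem:unicity} the reduced representative in each homotopy class is unique and has length $O(n)$, the total potential drop across the whole execution is $O(n)$; combined with $O(1)$ work per move and $O(1)$ enqueues per move, this gives $O(n)$ total time.

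The main obstacle is the exceptional cyclic case in the definition of a reduced closed walk: the local reduction can stabilize on a walk that still fails to be reduced because, up to reversal, it zigzags along a strip with every middle triangle blue. After the local pass terminates, I would therefore perform a single additional $O(n)$ scan of the cyclic walk to test whether it has this form, and if so apply one global substitution that flips the walk to the opposite side of its zigzag strip. After this finalization the returned walk is reduced, freely homotopic to $C$, and was produced in $O(n)$ time overall, matching the bound stated in the lemma and in \cite[Proposition~3.7]{untangling-graphs}.
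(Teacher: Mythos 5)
First, note that the paper does not prove this lemma at all: it is imported verbatim from \cite{untangling-graphs} (Proposition~3.7 there), so there is no in-paper argument to compare against, and your proposal is really an attempt to reprove that external result. Your overall strategy (eliminate bad turns by local homotopy moves processed from a queue, then fix the exceptional cyclic case) is the natural one, but the step that carries all the difficulty is asserted rather than proved: the claim that each move strictly decreases the lexicographic pair (length, number of bad turns). That potential does not work. The move that removes a bad turn of the type ``two triangles on the left, the first one red'' at a vertex $v$ replaces the two-edge subwalk $u\to v\to w$ by the two-edge subwalk $u\to x\to w$ through the apex $x$ of the two triangles: the length is unchanged, and the turns at $u$ and at $w$ each change by one, so up to two new bad turns can be created while only one is destroyed. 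Hence the second coordinate of your pair can strictly increase while the first stays constant, and neither termination nor the $O(n)$ bound on the number of moves follows. Controlling exactly these length-preserving flips and the cascades they trigger is where the hypotheses on a reducing triangulation (all degrees at least eight, bipartite dual) must be used, and it is precisely the content of the cited Proposition~3.7; your appeal to Lemma~\ref{lem:unicity} only gives uniqueness of the reduced representative, not a bound on how many moves are needed to reach it.

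A secondary, smaller gap concerns the exceptional case: after the local pass you apply ``one global substitution that flips the walk to the opposite side of its zigzag strip'' and declare the result reduced. This needs verification: one must check that the flipped walk has no bad turn at any of its vertices (this is where degree $\geq 8$ enters, since the turns on the other side leave at least three triangles) and that it is not itself of the exceptional form (this is where the two-colouring of triangles enters). Without these two arguments --- the amortized bound on length-preserving flips and the analysis of the global flip --- the proposal is a plausible algorithm description but not a proof of the $O(n)$ statement.
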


Lemma~\ref{lem:unicity} will be used in the proof of Proposition~\ref{prop:redux}. Lemma~\ref{lem:reduce} will be used in the proof of Theorem~\ref{thm:main-without-boundary}. A statement similar to Lemma~\ref{lem:reduce} holds for reduced walks (the ones not closed), but we will not need it.

\subsection{Patch systems and perturbations}\label{sec:perturbations}

Let $M$ be a graph cellularly embedded on a surface $S$. We present an adaptation of the \emphdef{strip systems} used by Akitaya, Fulek, and Tóth~\cite{aft-rweg-19} (and others, see .e.g~\cite{CORTESE20091856}). Our presentation follows the one of Colin de Verdière, Despré, and Dubois~\cite[Section~2.2]{untangling-graphs}. See Figure~\ref{fig:patch-system}. The \emphdef{patch system} $\Sigma$ of $M$ is a surface that can be obtained from $S$ by first filling up any boundary component of $S$ (attaching a closed disk to it), and then by removing an open disk from the interior of every face of $M$. The patch system $\Sigma$ is a surface with boundary homotopically equivalent to $M$, usually thought of as a ``closed neighborhood'' of $M$ in $S$. By construction, the boundary components of $\Sigma$ correspond to the faces of $M$. We equip $\Sigma$ with a collection of pairwise-disjoint simple arcs between its boundary components: one arc $a$ for each edge $e$ of $M$, where $a$ crosses $e$ once and does not intersect $M$ anywhere else. Those arcs divide the interior of $\Sigma$ into open disks, one for each vertex of $M$.

\begin{figure}[ht]
    \centering
    \includegraphics[scale=0.25]{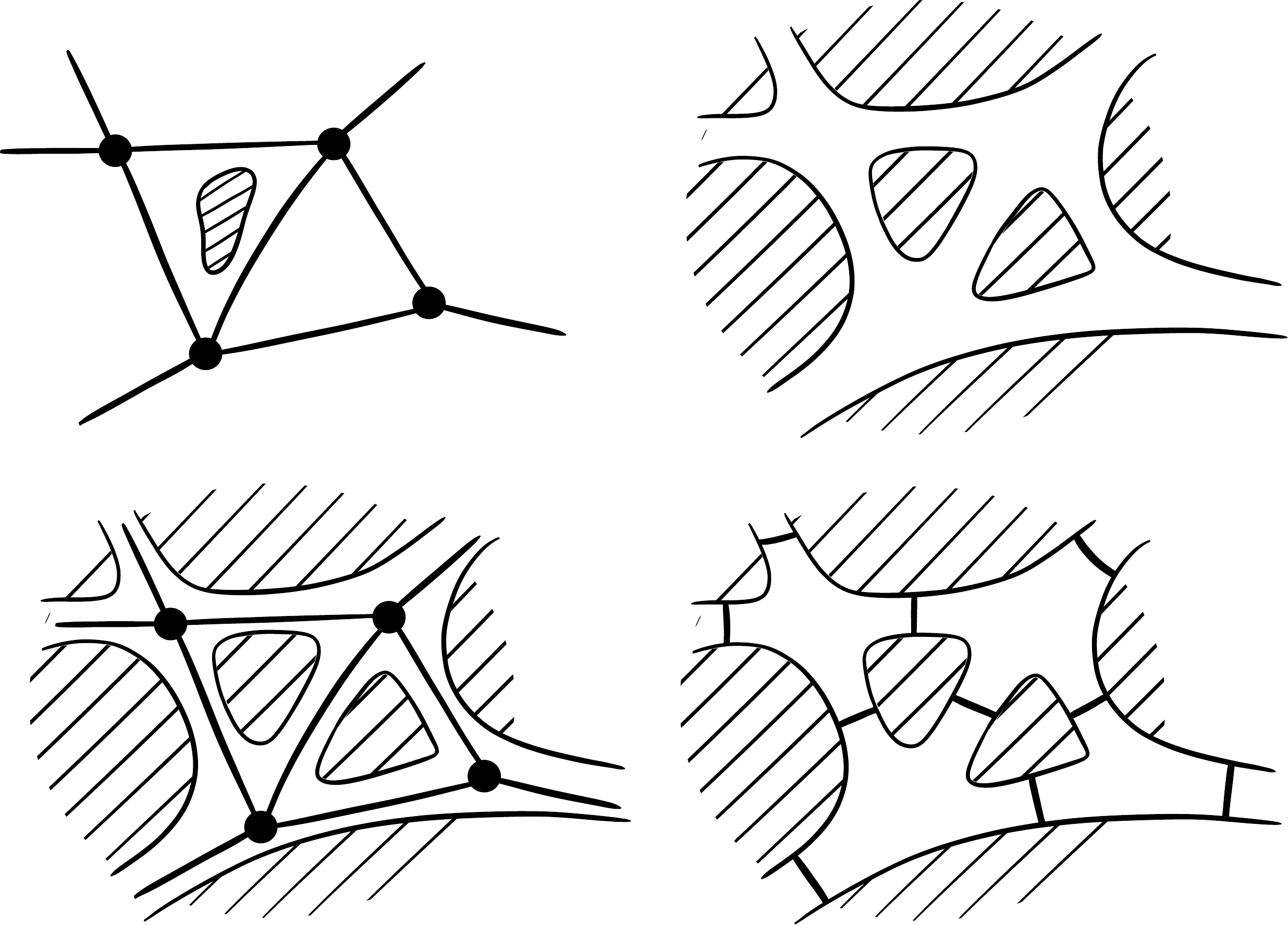}
    \caption{(Top Left) An embedded graph $M$. (Top Right) The patch system $\Sigma$ of $M$. (Bottom Left) We usually think of $\Sigma$ as a ``closed neighborhood'' of $M$. (Bottom Right) The arcs of $\Sigma$.}
    \label{fig:patch-system}
\end{figure}

Let $\gamma$ be a closed curve in general position in the patch system $\Sigma$ of $M$, where $\gamma$ intersects every disk of $\Sigma$ as a collection of simple paths, and where every two such paths intersect at most once in the disk. We retain from $\gamma$ only two things: the sequence of arcs of $\Sigma$ crossed by $\gamma$ and, for every arc $a$ of $\Sigma$, the order in the crossings between $\gamma$ and $a$ occur along $a$. Dually, we retain a closed walk $C$ in $M$ and, for every edge $e$ of $M$, a linear order $\prec_e$ on the occcurences of $e$ in $C$. In this paper, we say that $(\prec_e)_e$ is a \emphdef{perturbation} of $C$, and that $C$ and $(\prec_e)_e$ constitute a \emphdef{perturbed closed walk}. Several non-isotopic closed curves $\gamma$ may be represented by the same perturbed closed walk but they all have the same number of self-crossings, making the distinction between them irrelevant to us. This paragraph trivially extends to collections of closed curves. 

\begin{lemma}\label{lem:perturbations}
Let $\Sigma$ be the patch system of a graph $M$ cellularly embedded on a surface $S$. Let $C$ be a collection of closed walks in $M$. Let $\Gamma$ be a perturbation of $C$ with a minimum number of self-crossings. If $C$ has no spur, then $\Gamma$ has $i_{\Sigma}(C)$ self-crossings.
\end{lemma}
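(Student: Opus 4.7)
My plan is to prove the equality in two parts.

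The inequality $|\Gamma|\ge i_\Sigma(C)$ is essentially a consequence of the definition of perturbations. By the construction in Section~\ref{sec:perturbations}, any perturbation of $C$ is represented by a closed curve $\gamma$ in general position in $\Sigma$ with $|\gamma|$ self-crossings, and $\gamma$ is freely homotopic in $\Sigma$ to $C$ (viewed as the closed curve traced along the 1-skeleton $M\subset\Sigma$). Hence the number of self-crossings of any perturbation is at least $i_\Sigma(C)$.

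For the reverse inequality, I would start from a closed curve $\gamma$ in $\Sigma$ freely homotopic to $C$ with exactly $i_\Sigma(C)$ self-crossings and produce a perturbation of $C$ with at most that many self-crossings. First, by a homotopy that does not increase self-crossings, I would place $\gamma$ in \emph{good general position} with respect to the patch system: transverse to all arcs of $\Sigma$ and meeting each patch disk as a family of simple arcs, pairwise crossing at most once. This is achieved by standard transversality, together with homotopies inside each patch disk (a topological disk, hence arcs are determined rel endpoints) that can only remove self-crossings or bigons. The curve $\gamma$ now realizes a perturbation $\Gamma^*$ whose underlying closed walk $C^*$ is the cyclic sequence of arcs crossed. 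Because $\Sigma$ deformation retracts onto $M$, the walk $C^*$ is freely homotopic to $C$ in $M$. Since $\pi_1(M)$ is free, every free homotopy class of closed walks in $M$ has a unique cyclically reduced representative up to cyclic rotation. Thus, once $C^*$ is made cyclically spur-free, the hypothesis on $C$ gives $C^*=C$ up to rotation, so $\Gamma^*$ is a perturbation of $C$ with $|\gamma|=i_\Sigma(C)$ self-crossings.

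The remaining step is to remove spurs from $C^*$ without increasing the self-crossing count of $\gamma$. A spur of $C^*$ at a vertex $v$ corresponds to a subarc $\alpha\subset\gamma$ contained in the patch disk $D_v$, with both endpoints $p_1,p_2$ on a single boundary arc $a$ of $D_v$. Together $\alpha$ and the subarc of $a$ from $p_1$ to $p_2$ bound a peninsula disk $P\subset D_v$. I would eliminate an \emph{innermost} spur first, namely one whose peninsula $P$ contains no other spur of $\gamma$, by pushing $\alpha$ across $a$ into the adjacent patch disk $D_u$, which destroys the two arc crossings at $p_1,p_2$ and replaces the two strands in $D_u$ adjacent to them by a single merged strand. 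The \textbf{main obstacle} is verifying that this push never increases the total number of self-crossings. This requires a careful bookkeeping: the crossings of $\alpha$ with other strands of $\gamma$ inside $D_v$ are lost, while the crossings of the merged strand with other strands inside $D_u$ may be different; under good general position these are controlled by the cyclic order of endpoints on $\partial D_v$ and $\partial D_u$. Any residual increase would be witnessed by a strand $\beta$ in $D_u$ whose partner arc in $D_v$ lies inside $P$, contradicting either that $\alpha$ was innermost or the minimality of $\gamma$ (via extraction of a bigon of $\gamma$ using $\alpha$, $\beta$, and the peninsula). Iterating this spur elimination yields a perturbation of $C$ with at most $i_\Sigma(C)$ self-crossings, completing the proof.
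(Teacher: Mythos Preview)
Your overall structure matches the paper's: the easy inequality, then for the hard one you take a curve $\gamma$ realizing $i_\Sigma(C)$, record its underlying closed walk $C^*$, and use that in a graph (free $\pi_1$) two spur-free homotopic closed walks coincide up to rotation, so it suffices to show $C^*$ is spur-free.

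Where you diverge from the paper is in how you obtain spur-freeness, and this is where your argument has a genuine gap. The paper adds a \emph{secondary} minimization: among all $\gamma$ homotopic to $C$ with $i_\Sigma(C)$ self-crossings, choose one that also minimizes the number of intersections with the arcs of $\Sigma$. A spur then yields a bigon bounded by a subarc $\gamma_0\subset\gamma$ and a subarc $a_0$ of an arc; with $n$ and $m$ the numbers of crossings of $\gamma$ with the interiors of $\gamma_0$ and $a_0$, sliding $\gamma_0$ onto $a_0$ gives $n\le m$ (by minimality of self-crossings), while sliding $a_0$ onto $\gamma_0$ gives $m+2\le n$ (by minimality of arc crossings). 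That is the whole argument.

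Your alternative---pushing innermost spurs across the arc one at a time---is not completed. The sentence ``any residual increase would be witnessed by a strand $\beta$ in $D_u$ whose partner arc in $D_v$ lies inside $P$, contradicting either that $\alpha$ was innermost or the minimality of $\gamma$ via extraction of a bigon'' is not a proof. Exhibiting some bigon between pieces of $\gamma$ does not contradict that $\gamma$ has $i_\Sigma(C)$ crossings: on a surface such a bigon need not be embedded or innermost, and removing it can create new crossings elsewhere. Turning this into a rigorous monotone spur-removal essentially requires a bigon/monogon reduction lemma, which is much heavier than what the statement warrants. Replace your explicit spur-removal by the paper's secondary minimization of arc crossings; the two-line swap contradiction then gives spur-freeness immediately, and the rest of your write-up goes through.
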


\begin{figure}[ht]
    \centering
    \includegraphics[scale=0.2]{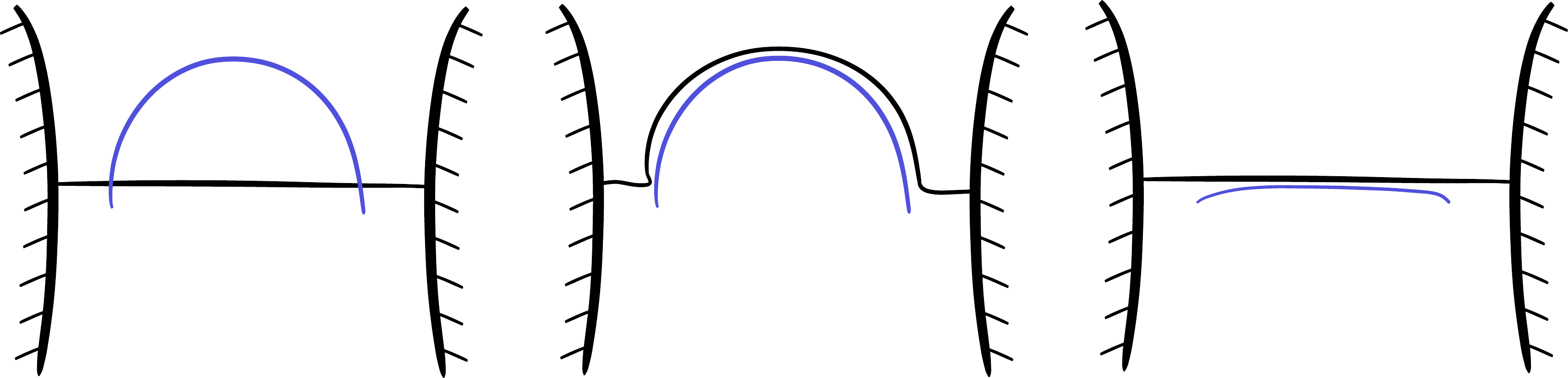}
    \caption{Removing a bigon in the proof of Lemma~\ref{lem:perturbations}.}
    \label{fig:remove-bigons}
\end{figure}

\begin{proof}
We assume for clarity that $C$ is a single closed walk, but the proof trivially extends to any collection of closed walks. Let $\gamma$ be a closed curve in general position in $\Sigma$, homotopic to $C$, with $i_S(C)$ self-crossings. Without loss of generality, assume that $\gamma$ intersects the arcs of $\Sigma$ a minimum number of times (among the closed curves that match its definition). Record the arcs crossed by $\gamma$ by the corresponding closed walk $C'$ in $M$, so that $\gamma$ is a perturbation of $C'$. We shall prove $C = C'$ (up to cyclic permutation). To do so, it is enough to prove that $C'$ has no spur. Indeed, $C$ and $C'$ are homotopic in $\Sigma$, and $\Sigma$ is homotopy equivalent to $M$, so $C$ and $C'$ are homotopic in $M$. Moreover, in a graph, every two homotopic closed walks without spur are equal, see~\cite[Chapter~2]{s-ctcgt-93}. 

We prove that $C'$ has no spur by contradiction, so assume that $C'$ has a spur. There is a portion of $\gamma$ that crosses an arc $a$ of $\Sigma$ and then crosses $a$ again consecutively, in the opposite direction. See Figure~\ref{fig:remove-bigons}. Let $\gamma_0$ and $a_0$ be the respective portions of $\gamma$ and $a$ between the two crossing points. Let $n$ and $m$ be the number of crossings of $\gamma$ with the interiors of $\gamma_0$ and $a_0$, respectively. Redrawing $\gamma_0$ in a neighborhood of $a_0$ would not decrease the number of self-crossings of $\gamma$, so $n \leq m$. Redrawing $a_0$ in a neighborhood of $\gamma_0$ would not decrease the number of crossings between $a_0$ and $\gamma$, so $m+2 \leq n$. This is a contradiction. 
\end{proof}

\begin{lemma}\label{lem:compute-nb-crossings}
Let $M$ be a graph of size $m$ cellularly embedded on a surface $S$. Let $\Sigma$ be the patch system of $M$. Let $C$ be a collection of closed walks of length $n$ in $M$. Let $\Gamma$ be a perturbation of $C$. One may compute the number of self-crossings of $\Gamma$ in $O(m + n \log n)$ time.
\end{lemma}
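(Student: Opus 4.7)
The plan is to observe that all self-crossings of $\Gamma$ are confined to the vertex-disks of the patch system $\Sigma$, and then at each vertex to reduce the count to the standard problem of counting interleaving pairs in a chord diagram on a circle. A preprocessing step will take $O(m+n)$ time: from the rotation system of $M$ I extract, for each vertex $v$, the cyclic order of incident edge-ends; and from one pass through $C$ I extract, for each vertex $v$, the list of its $k_v$ visits in $C$, where each visit is a pair (incoming edge-end, outgoing edge-end) and becomes a chord in the disk $D_v$. Combined with the perturbation data $(\prec_e)_e$, which prescribes the order in which the intersections of $\Gamma$ with each arc $a_e$ appear along $a_e$, this determines the cyclic sequence of $2k_v$ chord-endpoints on $\partial D_v$ together with the matching formed by the chords.

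At each vertex $v$, the self-crossings of $\Gamma$ located inside $D_v$ are exactly the pairs of chords that interleave on $\partial D_v$, by the requirement in Section~\ref{sec:perturbations} that each disk is crossed by simple paths that pairwise intersect at most once. I would count them with a standard sweep: cut $\partial D_v$ at a point, list the $2k_v$ endpoints linearly, and process them in order while maintaining a Fenwick tree of currently open chords keyed by opening position. At each opening, insert the chord; at each closing, query the number of currently open chords that were opened strictly after the chord being closed, and add this to a running total. This yields an $O(k_v \log k_v)$ count at vertex~$v$.

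Summing over all vertices, the total running time will be $O(m+n+\sum_v k_v \log k_v)$. Since each traversal of an edge of $C$ corresponds to one vertex-visit at each of its endpoints, we have $\sum_v k_v = n$, and hence $\sum_v k_v \log k_v \le n \log n$, which gives the claimed $O(m + n \log n)$ bound. The one delicate point, and the main obstacle of the proof, will be the setup phase: I need to produce the cyclic sequence of endpoints and the matching at each vertex within $O(\deg(v) + k_v)$ work so that the preprocessing sums to $O(m+n)$. This follows from the rotation system of $M$ and the orders $\prec_e$, provided one handles correctly the case where $e$ is a loop at $v$ (then $a_e$ contributes two segments to $\partial D_v$, whose orientations and positions in the cyclic order must be tracked separately). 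Once this bookkeeping is in place, the chord-crossing count is textbook.
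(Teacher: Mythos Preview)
Your proof is correct and follows the same reduction as the paper: self-crossings of $\Gamma$ are localized to the vertex-disks of $\Sigma$, and in each disk the count is exactly the number of interleaving pairs in a chord diagram on the boundary circle. The only difference is in how the core chord-crossing subproblem is solved. The paper uses a divide-and-conquer scheme: it chooses a median index $k$, partitions the matching edges into those entirely to the left of $k$, entirely to the right, and straddling, computes a weight function $\omega$ recording how many left (resp.\ right) edges cover each position, and combines via the recursion $\bot E = \bot E_0 + \bot E_1 + \bot E_2 + \sum_{uv\in E_2}(\omega(u)+\omega(v))$. You instead use the standard Fenwick-tree sweep. Both are textbook $O(k\log k)$ solutions to the same inversion-counting problem, and both sum over vertices to the stated $O(m+n\log n)$ bound; your preprocessing discussion (rotation system, the orders $\prec_e$, and the loop-edge caveat) is exactly the bookkeeping the paper leaves implicit under ``immediate reduction.''
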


\begin{proof}
One must count the number of self-crossings of $\Gamma$ within each disk of the patch system $\Sigma$. There is an immediate reduction to the following problem. Fix $n \geq 2$. Consider a matching of the set $[n]$, represented by its set $E$ of edges. Write every edge $e \in E$ as $e = ij$, where $i$ and $j$ are the vertices of $e$, and $i < j$. Say that an edge $ij \in E$ crosses an other edge $uv \in E$ if $i<u<j<v$, up to exchanging $ij$ with $uv$. Let $\bot E$ be the number of unordered pairs of edges of $E$ that cross. Our problem is to compute $\bot E$. Assuming that $E$ is a perfect matching, we claim that we can compute $\bot E$ in $O(n \log n)$ time.

To prove this claim, we apply a divide-and-conquer strategy. The base cases (small values of $n$) are trivial. In general, we consider the unique $k \in [n]$ for which exactly $\left\lceil n/2 \right\rceil$ edges $uv \in E$ satisfy $v \leq k$. We partition $E$ into three sets $E_0,E_1,E_2$ as follows. An edge $uv \in E$ belongs to $E_0$ if it satisfies $v \leq k$. It belongs to $E_1$ if it satisfies $k < u$. And it belongs to $E_2$ otherwise, that is if $u \leq k < v$. By definition of $k$, each one of the sets $E_0,E_1,E_2$ contains at most $\left\lceil n/2 \right\rceil$ edges. For every $l \leq k$, we let $\omega(l)$ be the number of edges $uv \in E_0$ that satisfy $u < l < v$. For every $l > k$, we let $\omega(l)$ be the number of edges $uv \in E_1$ that satisfy $u < l < v$. We have the recursion formula $\bot E = \bot E_0 + \bot E_1 + \bot E_2 + \sum_{uv \in E_2} (\omega(u) + \omega(v))$. We use this formula in our recursion step, as follows. First, we compute $k$ in $O(n)$ time. Then, we compute $E_0, E_1,$ and $E_2$ in $O(n)$ time. Also, we compute $\omega$ on $[n]$ in $O(n)$ time by dynamic programming.  Finally, we recurse to compute $\bot E_0 , \bot E_1$, and $\bot E_2$. And we deduce $\bot E$ in $O(n)$ time from the recursion formula.     
\end{proof}

\section{A property of reducing triangulations}\label{sec:redux}

In this section we prove Proposition~\ref{prop:redux}, which we recall:

\keyprop*

The proof uses classical arguments, and adapts the proof of~\cite[Proposition~4.3]{untangling-graphs}. First, we review the necessary background.

\subsection{Background}

In this section, given two closed curves $c$ and $d$ on a surface $S$, we let $i^*_S(c,d)$ be the minimum number of crossings between any two closed curves in general position, homotopic to $c$ and $d$. We apply the strategy of Despré and Lazarus and focus on \emphdef{primitive} closed curves. A closed curve $c$ on $S$ is primitive if there is no $k \geq 2$ for which $c$ would be homotopic to the $k^{th}$ power of another closed curve in $S$ (where the $k$-th power of closed curve $d : \cR / \cZ \to S$ is the closed curve $t \to d(kt)$). The following relates the intersection number of arbitrary closed curves to the intersection number of the primitive ones (see also~\cite[Proposition~26]{dl-cginc-19}):

\begin{lemma}[Theorems~1-6-7 in \cite{DEGRAAF1997134}]\label{lem:degraaf}
Let $S$ be a surface of genus $g \geq 1$ without boundary. Let $c$ and $d$ be two closed curves on $S$ homotopic to respectively the $n^{th}$ power of a primitive closed curve $\hat c$, and the $m^{th}$ power of a primitive closed curve $\hat d$, for some $n,m \geq 1$. Then $i_S(c) = n^2 i_S(\hat c) + n-1$. If $\hat c$ (or the reversal of $\hat c$) is homotopic to $\hat d$, then $i^*_S(c,d) = 2nm \times i_S(\hat c)$. Otherwise, $i^*_S(c,d) = nm \times i^*_S(\hat c,\hat d)$.
\end{lemma}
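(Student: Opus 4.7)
The plan is to work with geodesic representatives. Equip $S$ with a complete hyperbolic metric when $g \geq 2$, or a flat metric when $g = 1$. Every non-contractible primitive closed curve $\hat c$ then has a canonical geodesic representative $\hat\gamma$ realizing $i_S(\hat c)$, and the geodesic representatives of two non-homotopic primitive classes realize $i^*_S(\hat c, \hat d)$. The task reduces to counting crossings of perturbed $n$- and $m$-fold iterates of these geodesics.

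For the upper bounds, I would give explicit constructions. To realize $c \simeq \hat c^n$ in general position, thicken $\hat\gamma$ (off its self-intersections) to a thin annulus $A$, thread $n$ nearly-parallel copies of $\hat\gamma$ through $A$, and splice them at a single point via a cyclic permutation of the $n$ strands. Near each self-intersection of $\hat\gamma$, the $n$ strands going one way meet the $n$ strands going the other way, contributing $n^2$ crossings, and hence $n^2 i_S(\hat c)$ in total; the cyclic splice adds $n - 1$ crossings (its inversion number), yielding $i_S(c) \leq n^2 i_S(\hat c) + n - 1$. The same construction applied to both $c$ and $d$ gives the two-curve upper bounds: in the unrelated case each transverse crossing of $\hat\gamma, \hat\delta$ produces $nm$ crossings of $c$ with $d$, so $i^*_S(c, d) \leq nm \cdot i^*_S(\hat c, \hat d)$; in the related case, at each self-intersection of $\hat\gamma$ the two antiparallel $c$-$d$ pairings contribute $nm$ crossings each while the two parallel pairings contribute none, giving $i^*_S(c, d) \leq 2nm \cdot i_S(\hat c)$.

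The matching lower bounds come from covering-space arguments. The annular cover $S_{\hat c} \to S$ associated to $\langle \hat c\rangle \leq \pi_1(S)$ forces any representative of $c$ to lift to a closed curve winding $n$ times around the core, which must carry at least $n - 1$ self-crossings; these descend faithfully to $S$. Separately, in the universal cover $\widetilde S$, each self-intersection of $\hat\gamma$ corresponds to a pair of geodesic lifts crossing transversely, and along each such axis the $n$ consecutive deck-translates of a chosen lift of $c$ force $n^2$ distinct crossings per self-intersection of $\hat\gamma$ when projected back to $S$. A disjointness argument combines these two sources additively to give the single-curve lower bound; the two-curve lower bounds follow similarly by counting intersections between lifts of $c$ and lifts of $d$, separately in the related and unrelated cases, using that in the related case the lift-axes of $c$ and $d$ coincide as sets, doubling the count at each self-intersection of $\hat\gamma$.

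The main obstacle is the bookkeeping in the lower bound: showing that the annular $n - 1$ contribution and the per-self-intersection $n^2 i_S(\hat c)$ contribution are genuinely additive and not overcounted, and handling degenerate cases such as $n = 1$ (no splice) or $g = 1$ (no hyperbolic structure, but then $\hat\gamma$ is simple so $i_S(\hat c) = 0$ and the formulas collapse to facts about $\mathbb Z^2$). De Graaf and Schrijver's original proof in fact avoids hyperbolic geometry altogether, arguing combinatorially via minimal-crossing diagrams and Reidemeister-type reductions on the surface; that route is presumably technically cleaner in some respects and is why the lemma is cited here rather than reproved.
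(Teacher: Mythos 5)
The paper does not prove this lemma at all: it is imported wholesale from de~Graaf and Schrijver (Theorems~1, 6, 7 of their paper), precisely because the hard content lies there. So the relevant question is whether your sketch would stand on its own, and it does not: the upper-bound constructions (threading $n$ and $m$ nearly parallel strands along the geodesics, splicing with a cyclic permutation costing $n-1$ crossings, counting $n^2$, $nm$, or $2nm$ crossings per intersection of the primitive geodesics) are standard and essentially correct, but the lower bounds --- which are the entire substance of the cited theorems --- are only gestured at. Saying that the annular cover ``forces $n-1$ self-crossings'' which ``descend faithfully,'' that deck-translates of lifts ``force $n^2$ distinct crossings per self-intersection,'' and that ``a disjointness argument combines these two sources additively'' names exactly the three claims that need proof and supplies none of them. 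Concretely: a general-position representative of $c$ need not live in a neighborhood of the geodesic $\hat\gamma$, so you cannot count its crossings by looking at strands near $\hat\gamma$; distinct crossing pairs of lifts in $\widetilde S$ can project to the same crossing of $c$ in $S$, so the $n^2 i_S(\hat c)$ count requires an orbit-counting argument (of the kind the paper carries out in Appendix~A for Lemma~\ref{lem:minimal-position-2}, and which there only yields minimality statements, not the exact formula with the additive $n-1$); and nothing in your sketch shows the $n-1$ ``winding'' crossings are disjoint from the $n^2 i_S(\hat c)$ ``geodesic'' crossings rather than absorbed into them. You flag this bookkeeping as ``the main obstacle,'' which is accurate, but it is not an obstacle you overcome --- it is the theorem.

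Two smaller points. First, you silently use that the geodesic representative of a primitive class realizes $i_S(\hat c)$ and that two geodesics realize $i^*_S(\hat c,\hat d)$; this is itself a nontrivial theorem (Hass--Scott), so even your upper-bound framing rests on an uncited substantial result. Second, your closing remark is the right conclusion: de~Graaf and Schrijver's combinatorial argument (and the paper's decision to cite rather than reprove) exists because filling these gaps rigorously is genuinely laborious; a proof attempt for this lemma should either reproduce that machinery or explicitly reduce to it, not replace it with a plausibility argument about additivity.
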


In the setting of Lemma~\ref{lem:degraaf}, draw $c$ and $d$ in the neighborhoods of $\hat c$ and $\hat d$ in the way described by Figure~\ref{fig:non-primitive-curves}. Lemma~\ref{lem:degraaf} essentially says that if $\hat c$ and $\hat d$ cross themselves and each other minimally, then so do $c$ and $d$.

\begin{figure}[ht]
    \centering
    \includegraphics[scale=0.28]{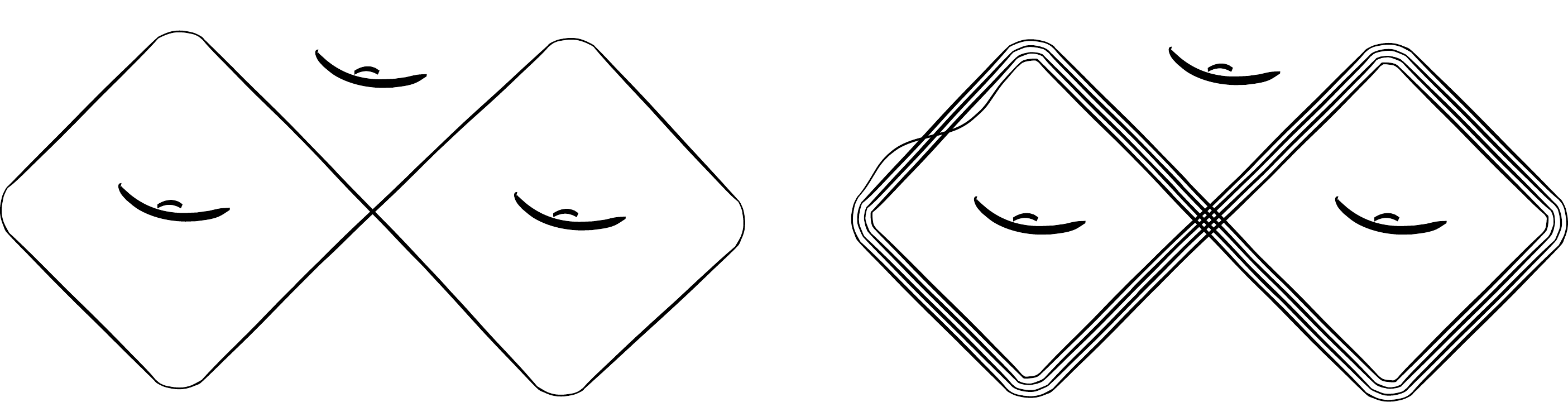}
    \caption{(Left) A primitive closed curve $\hat c$, in minimal position. (Right) The fourth power $c$ of $\hat c$, in minimal position in a neighborhood of $\hat c$.}
    \label{fig:non-primitive-curves}
\end{figure}

The following gives a sufficient condition for two primitive closed curves to cross each other minimally. It is a classical fact, very similar to \cite[Lemma~4]{dl-cginc-19}. We could not find a proof of this exact statement in the literature, so we provide one in Appendix~\ref{app:minimal-position-2} for completeness. The most similar result we could find in the litterature is by Hass and Scott~\cite[Theorem~3.5]{hs-ics-85}, but they stated it only for a single closed curve, and we need to handle two closed curves.

\begin{lemma}\label{lem:minimal-position-2}
On a surface $S$ of genus $g \geq 2$ without boundary, consider two primitive closed curves $c$ and $d$ (possibly homotopic) in general position. If in the universal cover of $S$, no lift of $c$ crosses a lift of $d$ more than once, then $c$ and $d$ cross each other minimally.
\end{lemma}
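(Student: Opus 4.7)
The plan is to invoke the bigon criterion. By a theorem of Hass and Scott~\cite{hs-ics-85,hass1994shortening}, if two closed curves in general position on a surface of negative Euler characteristic are not in minimal position, meaning $|c \cap d| > i_S^{*}(c,d)$, then $c$ and $d$ bound an embedded bigon: a closed disk in $S$ whose boundary decomposes as one subarc of $c$ and one subarc of $d$, meeting at two crossings of $c$ and $d$. It will therefore suffice to show that, under the hypothesis of the lemma, $c$ and $d$ bound no such bigon.

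The argument proceeds by contradiction. Suppose $B$ is a bigon bounded by subarcs $\alpha \subset c$ and $\beta \subset d$, meeting at two crossings $p$ and $q$. Since $B$ is simply connected and $\widetilde{S}$ is the universal cover of $S$, $B$ lifts homeomorphically to an embedded closed disk $\widetilde{B} \subset \widetilde{S}$. Its boundary decomposes as $\widetilde{\alpha} \cup \widetilde{\beta}$, where $\widetilde{\alpha}$ and $\widetilde{\beta}$ are lifts of $\alpha$ and $\beta$ meeting at two preimages $\widetilde{p}$ and $\widetilde{q}$ of $p$ and $q$. By unique path lifting, $\widetilde{\alpha}$ is contained in a unique lift $\widetilde{c}$ of $c$, and $\widetilde{\beta}$ is contained in a unique lift $\widetilde{d}$ of $d$. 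Hence $\widetilde{c}$ and $\widetilde{d}$ cross at both $\widetilde{p}$ and $\widetilde{q}$, so at least twice, contradicting the hypothesis.

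The main obstacle is invoking the bigon criterion in the right form: the Hass--Scott theorem applies to immersed closed curves (not only simple ones), and the bigon one obtains is typically innermost among all bigons, meaning its interior contains no crossings of $c$ and $d$, though other arcs of $c$ or $d$ may still traverse it. The lifting argument above is indifferent to this interior behavior, as it only uses the lift of the boundary disk, so no extra care is needed on that front. As an independent verification, one could alternatively pass to a hyperbolic metric on $S$ (which exists since $g \geq 2$), replace $c$ and $d$ by their unique closed geodesic representatives in $S$, and compare crossing counts via the standard identification of pairs of crossing geodesic lines in $\mathbb{H}^{2}$ with linking pairs of endpoints on the circle at infinity; but the bigon-based approach is shorter and more self-contained.
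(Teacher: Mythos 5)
Your argument takes a genuinely different route from the paper. The paper never invokes a bigon criterion: it fixes a lift $\tilde a$ of one curve, considers the lifts of the other curve whose two endpoints on the circle at infinity link those of $\tilde a$, and counts orbits of such lifts under the $\mathbb{Z}^2$ action generated by the deck transformation $\lambda$ and the parameter shift $\tau$. It then shows this orbit count is always at most the number of crossings, equals it under the ``no pair of lifts crosses twice'' hypothesis, and can only increase under homotopy; minimality follows. Your closing ``independent verification'' sketch (linking of endpoints at infinity for geodesic lifts) is in fact much closer to the paper's actual proof than your main argument, which is instead essentially the classical route of Despré and Lazarus~\cite[Lemma~4]{dl-cginc-19}: bigon criterion plus lifting.

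The one genuine weak point is the form of the bigon criterion you invoke. For immersed (non-simple) closed curves, Hass and Scott~\cite{hs-ics-85} do not produce an \emph{embedded} bigon; their results yield a \emph{singular} bigon, i.e.\ subarcs $\alpha\subset c$ and $\beta\subset d$ joining two crossings $p,q$ such that the loop $\alpha\cdot\overline{\beta}$ is null-homotopic. The disk need not embed, and your hedge that ``other arcs may traverse the interior'' does not capture the issue: the boundary arcs themselves may fail to be embedded or disjoint, so the step ``$B$ lifts homeomorphically to an embedded disk'' is not justified as written. Note also that the paper explicitly remarks that the exact two-curve statement it needs could not be located in the literature (Hass and Scott's Theorem~3.5 is stated for a single curve), which is why Appendix~\ref{app:minimal-position-2} exists; so if you go this way you must pin down a precise citable two-curve singular-bigon statement covering non-simple, possibly homotopic (but primitive) curves. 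Fortunately your lifting step only uses the null-homotopy of the boundary loop, not the embedded disk: lift $\alpha$ into a lift $\tilde c$ of $c$; since $\alpha\cdot\overline{\beta}$ is null-homotopic, its lift closes up, so the corresponding lift of $\beta$ lies on a lift $\tilde d$ of $d$ and shares both endpoints with the lift of $\alpha$; these endpoints are two distinct transverse crossings of $\tilde c$ and $\tilde d$, contradicting the hypothesis. Rephrased with singular bigons and an accurate citation, your proof is correct; as written, it rests on a statement stronger than what the cited sources prove.
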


The following gives a sufficient condition for a single primitive closed curve to cross itself minimally. It is easily derived from Lemma~\ref{lem:degraaf} and Lemma~\ref{lem:minimal-position-2}. It is also an immediate consequence of the result of Hass and Scott~\cite[Theorem~3.5]{hs-ics-85}.

\begin{lemma}\label{lem:minimal-position-1}
On a surface $S$ of genus $g \geq 2$ without boundary, consider a primitive closed curve $c$ in general position. If in the universal cover of $S$, the lifts of $c$ are injective and do not cross more than once, then $c$ crosses itself minimally.
\end{lemma}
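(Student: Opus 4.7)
The plan is to reduce the single-curve statement to the two-curve Lemma~\ref{lem:minimal-position-2} by introducing a parallel copy of $c$, then to cash in with the counting identity of Lemma~\ref{lem:degraaf}. First I would produce a companion curve $d$ by taking a small parallel push-off of $c$: using the orientation of $S$ and a tubular neighborhood of the immersed curve $c$, let $d$ be obtained by translating every point of $c$ by a small distance $\epsilon$ along a chosen normal field. For $\epsilon$ sufficiently small, $d$ is smooth, freely homotopic to $c$, and in general position with itself and with $c$. Since $d$ is homotopic to the primitive curve $c$, it too is primitive.

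Next I would perform a local crossing count between $c$ and $d$. Near each transverse self-crossing $p$ of $c$, the two branches of $c$ form an ``X'' while the two corresponding branches of $d$ are parallel translates displaced off to one side; a routine check in local coordinates shows that exactly two crossings between $c$ and $d$ appear in a neighborhood of $p$. Away from the self-crossings of $c$, the curves $c$ and $d$ remain at distance $\epsilon$ and so are disjoint. Summing, the number of crossings between $c$ and $d$ equals $2k$, where $k$ denotes the number of self-crossings of $c$.

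Then I would verify the hypothesis of Lemma~\ref{lem:minimal-position-2} in the universal cover $\widetilde S$. The parallel push-off lifts to a parallel push-off of lifts, so every lift of $d$ is a parallel translate of some lift of $c$. Fix a lift $\widetilde c$ of $c$ and a lift $\widetilde d$ of $d$ which is pushed off from a lift $\widetilde c'$ of $c$. If $\widetilde c' = \widetilde c$, then $\widetilde d$ is parallel to $\widetilde c$, and since $\widetilde c$ is injective by hypothesis they do not meet for $\epsilon$ small. If $\widetilde c' \neq \widetilde c$, then crossings between $\widetilde c$ and $\widetilde d$ correspond bijectively, for $\epsilon$ small, to crossings between $\widetilde c$ and $\widetilde c'$, which by hypothesis number at most one. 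Hence no lift of $c$ crosses a lift of $d$ more than once.

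Finally I would apply Lemma~\ref{lem:minimal-position-2} to conclude that $c$ and $d$ cross each other minimally, so that the $2k$ crossings counted above equal $i^*_S(c,d)$. By Lemma~\ref{lem:degraaf}, applied with $n = m = 1$ and primitives $\hat c = c$ and $\hat d = d$ freely homotopic, $i^*_S(c,d) = 2 \, i_S(c)$. Therefore $k = i_S(c)$, which is precisely the assertion that $c$ crosses itself minimally. The main obstacle is the lifting step: one has to justify that a parallel push-off in $S$ lifts to a parallel push-off in $\widetilde S$ and that, for small $\epsilon$, crossings between $\widetilde c$ and a push-off of $\widetilde c'$ are in bijection with crossings between $\widetilde c$ and $\widetilde c'$. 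The injectivity and ``at most one crossing'' hypotheses are exactly what rule out high multiplicities collapsing under the push-off, but once these are in hand the argument is purely local.
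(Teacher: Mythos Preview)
Your proof is correct and follows essentially the same approach as the paper: take a parallel copy of $c$, count that the two copies have $2k$ crossings, verify the lift hypothesis of Lemma~\ref{lem:minimal-position-2}, and then invoke Lemma~\ref{lem:degraaf} with $n=m=1$ to get $2k = i^*_S(c,d) = 2\,i_S(c)$. The paper's version is terser (it simply asserts the lift condition for the two parallel copies), while you spell out the case analysis in the cover more carefully; but the argument is the same.
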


\begin{proof}[Proof of Lemma~\ref{lem:minimal-position-1}]
Let $p$ be the number of self-crossings of $c$. Consider two parallel copies $c_1$ and $c_2$ of $c$ in general position in a neighborhood of $c$. By construction, there are $2p$ crossings between $c_1$ and $c_2$ (there would be $4p$ crossings if we also counted the self-crossings of $c_1$ and those of $c_2$). A lift of $c_1$ cannot cross a lift of $c_2$ more than once, so $2p = i^*_S(c_1, c_2)$ by Lemma~\ref{lem:minimal-position-2}. We have $i^*_S(c_1, c_2) = 2 i_S(c)$ by Lemma~\ref{lem:degraaf}, since $c$ is primitive.
\end{proof}

\subsection{Proof of Proposition~\ref{prop:redux}}

\begin{proof}[Proof of Proposition~\ref{prop:redux}]
We have $i_S(C) \leq i_{\Sigma}(C)$ by the inclusion $\Sigma \subset S$. Let us prove the converse. Assume without loss of generality that no closed walk in $C$ is trivial (a closed walk is trivial if it consists of a single vertex). Then every $c \in C$ is non contractible on $S$ by Lemma~\ref{lem:unicity}, and is thus homotopic in $S$ to the $n(c)^{th}$ power power of a primitive closed curve, say $d$, for some $n(c) \geq 1$. Let $\hat{c}$ be the reduced closed walk homotopic to $d$, which exists by Lemma~\ref{lem:reduce}. Lemma~\ref{lem:unicity} implies that $c$ is actually equal to the $n(c)^{th}$ power of $\hat{c}$, since the two are freely homotopic non-contractible reduced closed walks. Let $\widehat{C}$ contain the resulting primitive reduced closed walks, counted without multiplicity. We shall construct, for every $\hat{c} \in \widehat{C}$, a closed curve $\widehat{\gamma}_{\hat{c}}$ homotopic to $\hat{c}$ in $\Sigma$, such that the collection $\{\widehat{\gamma}_{\hat{c}} \;|\; \hat{c} \in \widehat{C}\}$ is in general position. We claim the existence of such a collection for which $\widehat{\gamma}_{\hat{c}}$ crosses itself $i_S(\hat{c})$ times for every $\hat{c} \in \widehat{C}$, and for which $\widehat{\gamma}_{\hat{c}}$ and $\widehat{\gamma}_{\hat{d}}$ cross each other $i^*_S(\hat{c},\hat{d})$ times for every $\hat{c} \neq \hat{d} \in \widehat{C}$. Let us first explain why the claim infers the result. Realize every $c \in C$ by a closed curve $\gamma_c$ in the neighborhood of $\widehat{\gamma}_{\hat{c}}$ such in Figure~\ref{fig:non-primitive-curves}. Then, by Lemma~\ref{lem:degraaf}, the number of self-crossings of $\gamma_c$ is minimum among its homotopy class on $S$, and the number of crossings between $\gamma_c$ and $\gamma_d$ is also minimum.

To prove the claim, endow the interior of $\Sigma$ with a complete hyperbolic metric for which the arcs of $\Sigma$ are complete geodesics, as described in Section~\ref{sec:background-hyperbolic}. For every $\hat{c} \in \widehat{C}$, let $\widehat{\gamma}_{\hat{c}}$ be the geodesic closed curve in the homotopy class of $\hat{c}$ in $\Sigma$. We now see $\widehat{\gamma}_{\hat{c}}$ as a closed curve on $S$, by the inclusion $\Sigma \subset S$. The claim is immediate from Lemma~\ref{lem:minimal-position-2}, Lemma~\ref{lem:minimal-position-1}, and the following observation: in the universal cover of $S$, the lifts of the closed curves in $\{\widehat{\gamma}_{\hat{c}} \;|\; \hat{c} \in \widehat{C}\}$ are injective, and every two of them cannot cross more than once. We prove this observation by contradiction, so assume that one of those lifts is not injective. It contains a loop, which projects to a geodesic loop $\ell$ on $\Sigma$ that is contractible on $S$. The sequence of crossings of $\ell$ with the arcs of $\Sigma$ is that of a reduced walk. Thus, by Lemma~\ref{lem:unicity}, the loop $\ell$ does not cross any arc of $\Sigma$, which is impossible since $\ell$ is geodesic. Similarly, if two distinct lifts intersect twice without overlapping, then they form two paths with the same endpoints and otherwise disjoint, which project to geodesic paths $p$ and $q$ in $\Sigma$ that are homotopic in $S$. The sequence of crossings of $p$ and $q$ with the arcs of $\Sigma$ must be the same by Lemma~\ref{lem:unicity}, so $p$ and $q$ are homotopic in $\Sigma$, which is impossible since $p$ and $q$ are geodesics.
\end{proof}

\section{The algorithms}\label{sec:theorems}

\subsection{Surfaces without boundary}

In this section, we first prove Theorem~\ref{thm:main-without-boundary}. Then we derive in Corollary~\ref{cor:main-corollary} an algorithm of improved time complexity by allowing for a more compact representation of the output curve. 

Let us first prove Theorem~\ref{thm:main-without-boundary}. A conversion from graphs drawn on arbitrary cellular embeddings to graphs drawn on reducing triangulations was already described by Colin de Verdière, Despré and Dubois~\cite[Lemma~7.1]{untangling-graphs}. We reformulate their result in the context of closed walks (not general graph drawings):

\begin{lemma}[Particular case of Lemma~7.1 in~\cite{untangling-graphs}]\label{lem:conversion-redux}
Let $M$ be a graph of size $m$ cellularly embedded on a surface $S$ of genus $g \geq 2$ without boundary. Let $C$ be a collection of closed walks of length $n$ in $M$. One may construct in $O(m + g^2 + gn)$ time a reducing triangulation $T$ of $S$, and a collection $C'$ of closed walks of length $O(gn)$ in $T$, homotopic to $C$.
\end{lemma}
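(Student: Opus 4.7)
The plan is to directly invoke \cite[Lemma~7.1]{untangling-graphs}, with the observation that a collection of closed walks is a particular kind of graph drawing (the underlying ``graph'' being a disjoint union of cycles, drawn so that each cycle traces the corresponding closed walk in $M$). The cited lemma takes a drawing on one cellular embedding and produces a homotopic drawing on a reducing triangulation $T$ of $S$, which in our restricted setting becomes exactly a collection of closed walks in $T$ homotopic to $C$. Since the bounds in the cited lemma are already expressed in terms of $m$, $g$, and the size of the input drawing, the announced complexity $O(m + g^2 + gn)$ and length $O(gn)$ follow by direct substitution, and there is nothing more to check beyond this specialization.

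For the reader who wants to understand what this actually computes, I would sketch the three standard steps behind the cited lemma. First, build a canonical reducing triangulation $T$ of $S$; it depends only on the topology of $S$, so it has $O(g)$ cells and can be produced in $O(g^2)$ time from the standard construction recalled in~\cite{untangling-graphs}. Second, compute an overlay of $M$ and $T$ on $S$ in $O(m + g^2)$ time using combinatorial-map data structures, and use it to assign, to each edge $e$ of $M$, a walk $w_e$ in the 1-skeleton of $T$ that is homotopic to $e$ with matching endpoints; the routing in the cited paper guarantees $|w_e| = O(g)$. Third, obtain $C'$ by replacing each occurrence of each edge $e$ in every closed walk of $C$ with the corresponding $w_e$. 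The substitution is homotopy-preserving edge by edge, so each resulting closed walk in $T$ is homotopic in $S$ to its source in $C$, and the total length of $C'$ is $O(gn)$ because $C$ has $n$ edges and each expands to a walk of length $O(g)$. This substitution takes $O(gn)$ time.

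The only part of the proof with genuine topological content, and the one I expect would be the main obstacle if writing the argument from scratch, is the bound $|w_e| = O(g)$: a naive routing along the overlay could in principle produce walks whose length depends on $m$, which would ruin the target complexity. The cited paper circumvents this by routing $w_e$ along a carefully chosen spanning structure of $T$ rather than along the overlay itself, and the bound then follows from the fact that $T$ has only $O(g)$ cells. Since this analysis is carried out in full in~\cite{untangling-graphs}, my proof here would reduce to citing it and verifying that nothing in the specialization to closed walks reintroduces any hidden dependency on $m$ or $n$ in the per-edge length.
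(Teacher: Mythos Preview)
Your proposal is correct and matches the paper exactly: the paper gives no proof of this lemma at all, merely stating it as a reformulation of \cite[Lemma~7.1]{untangling-graphs} in the restricted setting of closed walks, which is precisely your first paragraph. One small caveat on your optional sketch: computing a full overlay of $M$ and $T$ in $O(m+g^2)$ time is too optimistic (the paper later builds such an overlay only at cost $O(g^2 m)$), but since you correctly identify that the per-edge bound $|w_e|=O(g)$ comes from routing in $T$ rather than along the overlay, and since this sketch is explicitly not your proof, this does not affect the validity of your argument.
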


This lemma is all we need to compute intersection numbers. However, in order to put curves in minimal position, we must construct an overlay between the input cellular embedding and a reducing triangulation. Such overlay is not explicitly given by~\cite[Lemma~7.1]{untangling-graphs}, but it can be extracted from the proof with some additional work.

On a surface $S$ of genus $g \geq 2$ without boundary, a \emphdef{canonical system of loops} $L$ is a set of pairwise disjoint simple loops with a common basepoint, such that, when cutting $S$ along $L$, we obtain a canonical polygonal schema, a 4g-gon whose boundary reads $a_1 b_1 \overline{a_1} \overline {b_1} \dots a_g b_g \overline{a_g} \overline{b_g}$ in this order, where $a_1, \dots, a_g$ denote the loops in $L$, and bar denotes reversal.

\begin{figure}[ht]
    \centering
    \includegraphics[scale=0.65]{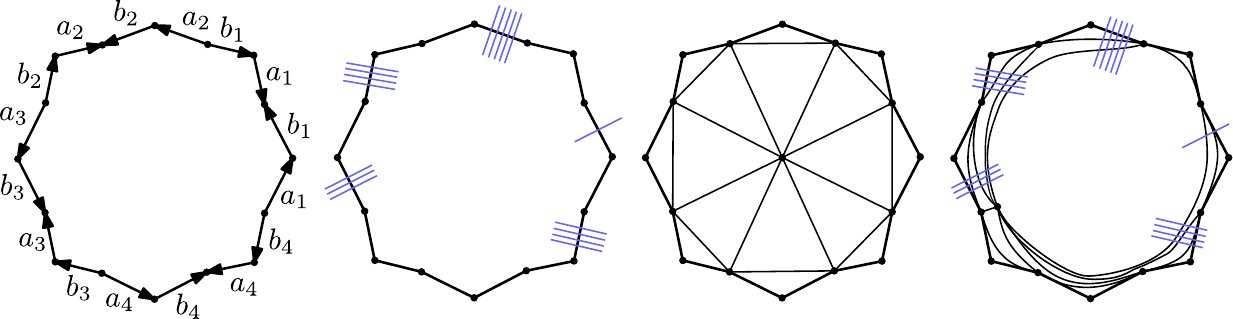}
    \caption{(Left) The canonical system of loops $L$ for the surface $S$ of genus $g=4$. (Middle Left) A blue graph $Q$ on $S$, in general position with respect to $L$, that intersects every edge of $L$ at most $m \geq 0$ times. (Middle Right) A reducing triangulation $T$ whose 1-skeleton contains $L$. (Right) An embedding of $T$ on $S$ such that each edge of $T$ crosses $Q$ at most $O(gm)$ times. This construction trivially generalizes to higher genus.}
    \label{fig:embed-redux}
\end{figure}

\begin{figure}[ht]
    \centering
    \includegraphics[scale=0.15]{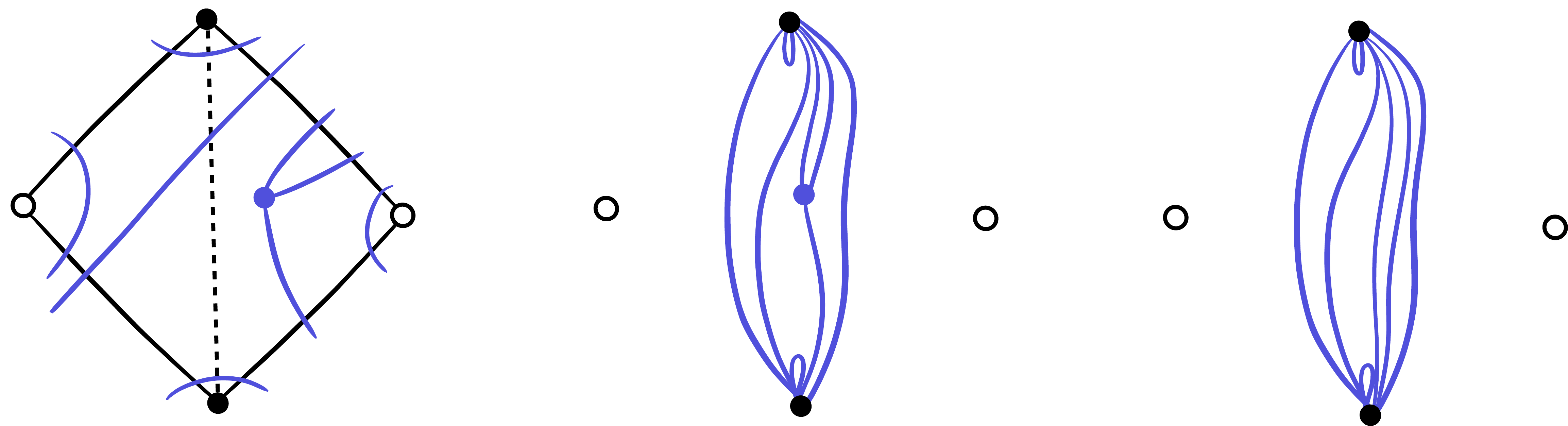}
    \caption{(Left) In the proof of Theorem~\ref{thm:main-without-boundary}, a portion of $T$ is here represented in blue in some face of $Q$. The two black disk vertices belong to $M$. The two circle vertices were inserted in faces of $M$ to build $Q$, they are dual vertices. The dashed edge was deleted from $M$ to build $Q$. The plain edges belong to $Q$. (From Left to Middle) In the overlay $Q'$ between $T$ and $Q$, every edge of $Q$ is detached from its incident dual vertex, then contracted. (From Middle to Right) Some edge incident to the blue disk vertex of $T$ is contracted.}
    \label{fig:contractions}
\end{figure}

\begin{proof}[Proof of Theorem~\ref{thm:main-without-boundary}]
First, we explain how to compute $i_S(C)$. Apply Lemma~\ref{lem:conversion-redux} to build in $O(m + g^2 + gn)$ time a reducing triangulation $T$ of $S$, and a collection $D$ of closed walks of length $O(gn)$ in $T$, homotopic to $C$. Apply Lemma~\ref{lem:reduce} to make the closed walks in $D$ reduced in $O(gn)$ time. Apply Theorem~\ref{fulek} to compute a minimal perturbation of $D$ in $O(gn \log (gn))$ time. This perturbation has $i_S(C)$ self-crossings by Proposition~\ref{prop:redux} and Lemma~\ref{lem:perturbations}. Apply Lemma~\ref{lem:compute-nb-crossings} to count them in $O(gn \log (gn))$ time.

Now, we explain how to construct a collection $C'$ of closed walks in $M$, homotopic to $C$, and a perturbation of $C'$ with $i_S(C)$ self-crossings. Let $Q$ be obtained from $M$ by inserting a new vertex in every face, by  adding an edge between this vertex and every corner of the face, and by removing from $M$ all its initial edges. Then $Q$ is a quadrangulation of size $O(m)$. Let $L$ be the canonical system of loops for the surface of genus $g$ without boundary. Let $T$ be a reducing triangulation whose 1-skeleton contains $L$ such as in~\cite[Figure~7.1]{untangling-graphs}. Apply a result of Pocchiola, Lazarus, Vegter, and Verroust~\cite{lpvv-ccpso-01} to compute in $O(gm)$ time an embedding of $L$ in general position with respect to $Q$, such that each loop of $L$ crosses each edge of $Q$ at most four times. Compute a collection $C'$ of reduced closed walks of length $O(gn)$ in $T$, homotopic to $C$. More precisely merge, for the needs of this paragraph only, the vertices of $Q$ with the vertex of $L$ so that every edge of $Q$ becomes a sequence of $O(g)$ arcs on the face of $L$ ($Q$ is not embedded anymore). Then push every arc into a path of length $O(1)$ in $T$. In the end, apply Lemma~\ref{lem:reduce} to make the closed walks in $C'$ reduced in $O(gn)$ time.

Apply Theorem~\ref{fulek} to compute in $O(gn \log(gn))$ time a minimal perturbation $\Gamma$ of $C'$. Then $\Gamma$ has $i_S(C)$ self-crossings by Proposition~\ref{prop:redux} and Lemma~\ref{lem:perturbations}.

Construct from the embedding of $L$ an embedding of $T$ in general position with respect to $Q$, such that each edge of $T$ crosses $Q$ at most $O(gm)$ times, see Figure~\ref{fig:embed-redux}. Build the overlay $Q'$ between $Q$ and the embedded $T$ in $O(g^2 m)$ time. Recall that, by construction, every vertex of $Q$ is either a vertex of $M$, or a dual vertex inserted in a face of $M$. Every edge $e$ of $Q$ is between a vertex $v$ of $M$ and a dual vertex $w$. In $Q'$, the edge $e$ is subdivided as a path $P$ between $v$ and $w$, defined by the intersection points between $e$ and $T$. Delete the edge of $P$ incident to $w$, and contract the other edges of $P$. Do so for every edge of $Q$. See Figure~\ref{fig:contractions}. The dual vertices are now isolated in $Q'$, delete them. Also, contract some of the edges incident to the vertices of $T$ so that every vertex of $Q'$ is now a vertex of $M$. At this point, $Q'$ is a sub-graph of $M$ except that some edges may now be multiple parallel edges, and some contractible loop-edges may have appeared. After all those edge contractions, $C'$ and $\Gamma$ constitute a perturbed collection of closed walks of length $O(g^2 mn)$ in $Q'$. Derive from them a perturbed collection of closed walks in $M$, not longer, and with no more self-crossing, by removing the loop-edges and merging the parallel edges.
\end{proof}

In the setting of Theorem~\ref{thm:main-without-boundary}, we get an algorithm of improved time complexity by first contracting a spanning tree in $M$ (modifying the input closed walks accordingly) to make $M$ have only one-vertex:

\begin{corollary}\label{cor:main-corollary}
Let $M$ be a one-vertex graph cellularly embedded on an orientable surface $S$ of genus $g \geq 2$ without boundary. Let $C$ be a collection of closed walks of length $n$ in $M$. One may construct a collection $C'$ of closed walks of length $O(g^3n)$ in $M$, homotopic to $C$, and a perturbation of $C'$ with $i_S(C)$ self-crossings, all in $O(g^3n + gn \log gn)$ time.
\end{corollary}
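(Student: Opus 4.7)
The plan is to preprocess $M$ down to size $O(g)$, and then invoke Theorem~\ref{thm:main-without-boundary}.

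I would iteratively eliminate monogon and bigon faces of $M$. A \emph{monogon face} is bounded by a single self-loop $e$, which is therefore contractible in $S$. I would delete $e$ (merging the monogon with its other adjacent face) and delete every occurrence of $e^{\pm 1}$ in the walks of $C$; this preserves the homotopy class in $S$ of every walk. A \emph{bigon face} is cobounded by two distinct loops $e_1, e_2$; since $e_1 e_2^{-1}$ bounds the bigon disk, $e_1$ and $e_2$ represent the same element of $\pi_1(S, v)$ up to sign, so I would delete $e_2$ and substitute each $e_2^{\pm 1}$ in the walks by the appropriate $e_1^{\pm 1}$. Each operation decreases the edge count by one and does not increase walk lengths, so with a doubly linked list representation the whole preprocessing runs in $O(m + n)$ time and preserves the one-vertex cellular structure.

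After preprocessing, every face of $M$ has length at least three, hence $3f \leq 2e$. Combined with Euler's formula $1 - e + f = 2 - 2g$ for the one-vertex embedding, this gives $e, f = O(g)$, so the preprocessed $M$ has size $O(g)$.

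Then I would apply Theorem~\ref{thm:main-without-boundary} to the preprocessed $M$ and the updated walks. Plugging $m = O(g)$ into its bounds yields a collection $C'$ of closed walks of length $O(g^2 \cdot g \cdot n) = O(g^3 n)$, together with a perturbation having $i_S(C)$ self-crossings, in time $O(g^3 n + gn \log gn)$. Since every edge of the preprocessed $M$ is also an edge of the original $M$, the output $C'$ is a collection of closed walks in the original $M$ as required, and its homotopy class agrees with that of the input $C$ in $S$.

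The main obstacle is the correctness of the preprocessing: I must verify that monogon and bigon eliminations preserve the homotopy class of each walk in $S$, which reduces to the observation that monogon loops are null-homotopic in $S$ and bigon loops are freely homotopic in $S$ via the bigon disk. The rest is routine bookkeeping and a direct application of Theorem~\ref{thm:main-without-boundary}.
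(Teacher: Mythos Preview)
Your proposal is correct and follows exactly the paper's own proof: reduce $M$ to size $O(g)$ by iteratively removing monogon loops and merging bigon pairs, then apply Theorem~\ref{thm:main-without-boundary} with $m=O(g)$. The paper's proof is a terse two-line version of what you wrote; your additional remarks on homotopy preservation and on why the output lives in the original $M$ are sound elaborations of what the paper leaves implicit.
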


\begin{proof}
Construct a subgraph of $M$ by merging any two edges that bound a bigon, and by deleting any edge that bounds a monogon. This subgraph has size $O(g)$ by Euler formula. Apply Theorem~\ref{thm:main-without-boundary}.
\end{proof}

\subsection{Surfaces with boundary}

In this section, we prove results similar to Theorem~\ref{thm:main-without-boundary} and Corollary~\ref{cor:main-corollary} for the surfaces with boundary.

\begin{theorem}\label{thm:main-with-boundary}
Let $M$ be a graph of size $m$ cellularly embedded on a surface $S$ of genus $g \geq 0$ with $b \geq 1$ boundary components. Let $C$ be a collection of closed walks of length $n$ in $M$. One may construct a collection $C'$ of closed walks of length $O(mn)$ in $M$, homotopic to $C$, and a perturbation of $C'$ with $i_S(C)$ self-crossings, all in $O(mn + (g+b)n \log((g+b)n))$ time. 
\end{theorem}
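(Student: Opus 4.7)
The plan is to reduce to Theorem~\ref{fulek} along the same lines as the proof of Theorem~\ref{thm:main-without-boundary}, but without needing reducing triangulations: since $b \geq 1$, the surface already contains boundary components that we can use as ``anchors'' for Fulek--Tóth to be tight. The key is to modify $M$ into a cellular embedding $M^*$ of $S$ in which every face is an annulus containing exactly one boundary component of $S$. Under this condition, the patch system $\Sigma^*$ of $M^*$ is homeomorphic to $S$ itself---the open disk prescribed in the construction of $\Sigma^*$ at each face can be chosen to coincide with the disk that was filled in at that face's boundary component---so $i_{\Sigma^*}(\cdot) = i_S(\cdot)$ on homotopy classes realized in $M^*$, and Lemma~\ref{lem:perturbations} yields a Fulek--Tóth perturbation with exactly $i_S(C)$ self-crossings.

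\medskip
\noindent First I would build $M^*$. In the dual graph of $M$, mark the vertices corresponding to annular faces (which exist since $b \geq 1$, as each boundary component sits in one annular face) and compute a spanning forest rooted at these marked vertices; deleting the corresponding primal edges from $M$ merges every disk face into a neighboring annular face. Then contract a spanning tree. By Euler's formula $V - E + F = \chi(S)$ applied with $V = 1$ and $F = b$, the resulting one-vertex embedded graph $M^*$ has $O(g+b)$ edges. I would also precompute, in $O(m)$ time, an expansion $e \mapsto e^*$ from each edge of $M$ to a walk in $M^*$: a contracted tree edge expands to the empty walk, a retained non-tree edge to itself, and a deleted edge to a precomputed walk in $M^*$ homotopic in $S$ to it, obtained by tracing the relevant portion of the boundary of its bounding annular face. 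Each individual expansion has length $O(g+b)$, as the $M^*$-perimeter of every face is bounded by $2|E(M^*)| = O(g+b)$.

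\medskip
\noindent Substituting these expansions into $C$ yields a homotopic collection $C^*$ of closed walks in $M^*$ of total length $O((g+b)n)$, computable in $O((g+b)n) \subseteq O(mn)$ time. After eliminating spurs, Theorem~\ref{fulek} applied to $M^*$ and $C^*$ produces a minimal perturbation $\Gamma^*$ in $O((g+b)n \log((g+b)n))$ time; by Lemma~\ref{lem:perturbations} together with $\Sigma^* \cong S$, it has exactly $i_S(C)$ self-crossings. Finally, re-interpret $C^*$ as a collection of closed walks in $M$ via the inclusion $M^* \subseteq M$, and extend $\Gamma^*$ trivially on the edges of $M$ not used by $C^*$. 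The output has total length $O((g+b)n) = O(mn)$ (since $g + b = O(m)$ by Euler's formula), and the total running time is $O(mn + (g+b)n \log((g+b)n))$.

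\medskip
\noindent The main obstacle is the first step: choosing the deletions and the spanning tree so that each deleted edge of $M$ admits a re-route of length $O(g+b)$ in $M^*$ (rather than the naive $O(m)$ around its original face boundary in $M$), and so that the entire collection of re-routes is computable in $O(m)$ amortized time. This is precisely the kind of ``conversion between models'' developed in \cite[Section~8]{untangling-graphs}; once it is in place, everything else follows directly from Lemma~\ref{lem:perturbations} and the key observation $\Sigma^* \cong S$.
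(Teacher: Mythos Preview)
Your overall strategy coincides with the paper's: delete primal edges dual to a spanning forest so that every remaining face contains exactly one boundary component, contract a spanning tree to get a one-vertex graph of size $O(g+b)$, apply Theorem~\ref{fulek} there, and invoke Lemma~\ref{lem:perturbations} together with the observation that the patch system of this small graph is (up to homeomorphism) $S$ itself. The paper does exactly this, phrased via a tree--cotree decomposition (a primal spanning tree $Y$ and a disjoint dual spanning tree $Y^*$); your dual spanning forest rooted at the annular faces plays the role of the relevant part of $Y^*$.

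There is, however, a genuine gap in your last step. You write ``re-interpret $C^*$ as a collection of closed walks in $M$ via the inclusion $M^* \subseteq M$'', but $M^*$ is \emph{not} a subgraph of $M$: you contracted a spanning tree, so a closed walk in the one-vertex graph $M^*$ is only a cyclic sequence of non-tree edges of $M$, not a closed walk in $M$. To realize it in $M$ you must re-insert the tree path between the head of each edge and the tail of the next one, and those paths can have length $\Theta(m)$. Thus the output in $M$ has length $O((g+b)\,n\cdot m)$ in the worst case, not the $O((g+b)n)$ you claim. The paper sidesteps this by keeping the intermediate subgraph $M_0 \subseteq M$ (your graph before contraction) and building the output $C'$ \emph{there}: each edge of $C$ is pushed to a path in $M_0$ using every edge of $M_0$ at most once, which immediately gives $|C'|=O(mn)$ and, after contracting $Y$, also gives the collection $C_1$ of length $O((g+b)n)$ on which Fulek--T\'oth is run. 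The perturbation computed on $C_1$ is then pulled back to $C'$ in $O(mn)$ time. Your argument is easily repaired the same way: compute the edge expansions first as walks in the uncontracted subgraph $M'$ (length $O(m)$ each, giving the final output $C'$ in $M$ of length $O(mn)$), and only then contract to obtain the short walks in $M^*$ for the Fulek--T\'oth call.
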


Here, the conversion is given by~\cite[Lemma~8.4]{untangling-graphs}. We adapt the proof of this lemma.

\begin{proof}
Let $\widehat S$ be the surface without boundary obtained by filling up every boundary component of $S$ with a disk. In $\widehat S$, fix a spanning tree $Y$ of $M$, and a spanning tree $Y^*$ of the dual of $M$, so that $Y$ and $Y^*$ are disjoint. Let $L$ contain every edge of $M$ that does not belong to $Y$ and whose dual does not belong to $Y^*$. Let $M_0$ be the union of $Y$, of $L$, and of an additional set of edges whose duals belong to $Y^*$. Select those edges so that $M_0$ contains a boundary component of $S$ in every face, and is maximal under this requirement. Let $M_1$ result from $M_0$ by contracting $Y$. Then $M_1$ has size $O(g+b)$ by Euler formula. Push by homotopy in $S$ every edge of $C$ to a path in $M_0$. Then $C$ becomes a collection $C'$ of closed walks in $M_0$, without spur. Every edge of $C$ becomes in $C'$ a path that uses at most once every edge of $M_0$ (without loss of generality), so $C'$ has length $O(mn)$. Similarly, $C'$ corresponds to a collection $C_1$ of length $O((g+b)n)$ in $M_1$. Apply Theorem~\ref{fulek} to compute a minimal perturbation of $C_1$ in $O((g+b)n \log ((g+b)n)$ time. This perturbation has $i_S(C)$ self-crossings by Lemma~\ref{lem:perturbations}. Recover in $O(mn)$ time a perturbation of $C'$ without additional crossing.
\end{proof}

\begin{corollary}\label{cor:boundary}
Let $M$ be a one-vertex graph cellularly embedded on a surface $S$ of genus $g \geq 0$ with $b \geq 1$ boundary components. Let $C$ be a collection of closed walks of length $n$ in $M$. One may construct a collection $C'$ of closed walks of length $O((g+b)n)$ in $M$, homotopic to $C$, and a perturbation of $C'$ with $i_S(C)$ self-crossings, all in $O((g+b) n\log ((g+b)n))$ time. One may compute $i_S(C)$ at no extra cost.
\end{corollary}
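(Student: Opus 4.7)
The plan is to reduce to Theorem~\ref{thm:main-with-boundary} after simplifying the one-vertex graph $M$ to size $O(g+b)$, following the same strategy as Corollary~\ref{cor:main-corollary} in the closed case. First, I would preprocess $M$ exhaustively by merging the two edges bounding any disk bigon face and deleting the edge bounding any disk monogon face. Each such operation preserves the homeomorphism type of $S$ together with the homotopy class of every closed curve, since the eliminated edges are either contractible (disk monogons) or pairwise homotopic with reversal (disk bigons); the input collection $C$ is correspondingly rewritten into the simplified graph, still of total length $O(n)$, in $O(m+n)$ time. Importantly, only disk monogon and disk bigon faces are eliminated; monogon or bigon faces that contain a boundary component of $S$ are essential and must not be touched.

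After exhaustive simplification, each disk face of $M$ has boundary degree at least $3$, while at most $b$ faces contain a boundary component of $S$ and may have any degree. Applying Euler's formula to the closed surface $\widehat{S}$ of genus $g$ obtained by filling each boundary of $S$ with a disk, we get $E = F + 2g - 1$; decomposing $F = F_d + F_b$ with $F_b \leq b$ and combining with the degree bound $2E \geq 3 F_d$, this yields $m = E = O(g+b)$. Now I would invoke Theorem~\ref{thm:main-with-boundary} on the simplified $M$ and the rewritten $C$: it returns a collection $C'$ of closed walks of length $O(mn) = O((g+b)n)$ in $M$, homotopic to $C$, together with a perturbation of $C'$ with $i_S(C)$ self-crossings, in time $O(mn + (g+b)n\log((g+b)n)) = O((g+b)n\log((g+b)n))$. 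Because the simplified $M$ is a subgraph of the input, $C'$ remains a collection of closed walks in the original graph. Finally, counting the self-crossings of the perturbation with Lemma~\ref{lem:compute-nb-crossings} takes $O(m + (g+b)n\log((g+b)n))$ time, which is absorbed in the overall bound, so $i_S(C)$ is returned at no extra cost.

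The main obstacle is the correct localization of the simplification: only monogon and bigon faces that are pure disks may be eliminated, while those containing a boundary component of $S$ carry essential topology of $S$ and must be retained. Once this distinction is carefully handled, the Euler-formula bound $m = O(g+b)$ and the invocation of Theorem~\ref{thm:main-with-boundary} are routine.
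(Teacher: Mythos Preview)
Your proposal is correct and follows essentially the same route as the paper: simplify the one-vertex graph by eliminating monogon and bigon faces, bound its size by $O(g+b)$ via Euler's formula, then invoke Theorem~\ref{thm:main-with-boundary} and Lemma~\ref{lem:compute-nb-crossings}. Your explicit distinction between disk monogons/bigons and those containing a boundary component of $S$ is a point the paper leaves implicit (it uses ``bigon'' and ``monogon'' in the standard sense of disk faces), and your handling of it is right.
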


\begin{proof}
Construct a subgraph of $M$ by merging any two edges that bound a bigon, and by deleting any edge that bounds a monogon. This subgraph has size $O(g+b)$ by Euler formula. Modify $C$ accordingly. Apply Theorem~\ref{thm:main-with-boundary} and Lemma~\ref{lem:compute-nb-crossings}.
\end{proof}

\subsection{Torus}

In this section, we prove results similar to Theorem~\ref{thm:main-without-boundary} and Corollary~\ref{cor:main-corollary} for the torus $\torus$. Given a collection $\Gamma$ of closed curves on $\torus$, there exist formulas for computing $i_{\torus}(\Gamma)$, see e.g.~\cite[Section~1.2.3]{fm-pmcg-12}. However, putting $\Gamma$ in minimal position in a purely discrete way requires additional work.

On a torus $\torus$, a \emphdef{canonical system of loops}~$L$ consists in two pairwise disjoint simple loops with common basepoint that cross at the basepoint. The dual embedded graph $L^*$ of $L$ is also a canonical system of loops on $\torus$. Endow $\torus$ with a flat metric for which the face of $L^*$ is isometric to the interior of a flat square (or parallelogram). Let $\torus^{\times}$ be obtained from $\torus$ by removing the vertex of $L^*$. We say that a closed walk $D$ in $L$ is \emphdef{quasi-geodesic} if $D$ has no spur, and if $D$ is homotopic to a (non-contractible) geodesic closed curve in $\torus^{\times}$. We insist that both the geodesic closed curve and the homotopy are in $\torus^{\times}$ (not in $\torus$).

\begin{lemma}\label{lem:quasi-geodesic}
On a torus $\torus$, let $L$ be a canonical system of loops. Let $C$ be a non-contractible closed walk of length $n$ in $L$. One may compute in $O(n)$ time a quasi-geodesic closed walk $D$ of length $O(n)$ in $L$, homotopic to $C$ in $\torus$.
\end{lemma}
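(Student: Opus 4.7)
The plan is to realize $D$ as a cyclic cutting sequence of a Euclidean straight line on the flat torus, iterated as necessary. Traverse $C$ once in $O(n)$ time to obtain the signed exponent sums $(p,q) \in \cZ^2$ of the two loops $a, b$ of $L$ in $C$; since $C$ is non-contractible in $\torus$ and $\pi_1(\torus) = \cZ^2$, we have $(p,q) \neq (0,0)$, and $|p|+|q| \leq n$. Compute $d = \gcd(|p|, |q|)$ in $O(\log n)$ time by the Euclidean algorithm and write $(p,q) = d (p_0, q_0)$ with $(p_0, q_0)$ primitive.

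Using the flat metric given in the statement, identify the face of $L^*$ with the open unit square in $\cR^2/\cZ^2$, with $v^*$ at its center $(1/2, 1/2)$. Pick small positive generic reals $\epsilon_1, \epsilon_2$ so that the straight segment in $\cR^2$ from $(\epsilon_1, \epsilon_2)$ to $(p_0 + \epsilon_1, q_0 + \epsilon_2)$ avoids every integer point (lift of $v$) and every half-integer point (lift of $v^*$). Its projection $\gamma_0$ to $\torus$ is a closed geodesic in $\torus^\times$ of winding $(p_0, q_0)$. In $O(|p_0| + |q_0|)$ time, read off the cyclic cutting sequence $w$ of $\gamma_0$ with respect to $L^*$: each crossing of the lifted segment with a vertical half-integer line contributes one letter $a^{\pm 1}$ (sign set by orientation of the crossing), each crossing with a horizontal half-integer line contributes one letter $b^{\pm 1}$. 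Output $D = w^d$, a closed walk in $L$ of length $d(|p_0|+|q_0|) = |p|+|q| \leq n$, with total running time $O(n)$.

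Three things remain to verify. $(i)$ $D$ is homotopic to $C$ in $\torus$: both have winding $(p,q)$, and $\pi_1(\torus)$ is abelian. $(ii)$ $D$ has no spur: since $\gamma_0$ moves in a fixed $\cR^2$-direction, $w$ uses $a$ (resp.\ $b$) with one fixed sign only, so no two consecutive letters of $w^d$ are mutually inverse, even cyclically. $(iii)$ $D$ is freely homotopic in $\torus^\times$ to the non-contractible closed geodesic obtained by traversing $\gamma_0$ exactly $d$ times: the cutting sequence $w$ provides a cell-by-cell homotopy of $\gamma_0$ into $L$, where each segment of $\gamma_0$ cut off by $L^*$ lies in the single face of $L^*$, an open disk in $\torus^\times$ whose closure contains $L$, and is homotoped inside that face to the edge of $L$ recorded by $w$.

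The main obstacle is point $(iii)$: the homotopy must live entirely in $\torus^\times$, which is precisely why we chose $\gamma_0$ to miss every lift of $v^*$; otherwise a cell-by-cell homotopy could sweep across $v^*$ and alter the free-homotopy class in $\torus^\times$ (though not in $\torus$). The genericity of $\epsilon_1, \epsilon_2$ also ensures that $\gamma_0$ crosses $L^*$ transversely away from $v^*$ and does not meet $v$, so the cutting sequence is well defined and unambiguous.
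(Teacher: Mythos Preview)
Your proof is correct and follows essentially the same approach as the paper: compute the winding vector of $C$, realize a straight Euclidean geodesic in that direction on the flat torus avoiding the puncture $v^*$, and read off its crossing sequence with $L^*$ to obtain $D$. The paper works directly with the full vector $(p,q)$ rather than factoring out $\gcd(|p|,|q|)$ (your decomposition is harmless but unnecessary), and it names Bresenham's line algorithm explicitly as the $O(n)$ routine for producing the crossing sequence.
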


\begin{proof}
Let $L^*$ be the canonical system of loops dual to $L$ on $\torus$. Let $\ell_1, \ell_2$ be the two loops of $L$, and let $\ell_1^*, \ell_2^*$ be their respective dual loops. Identify $\torus$ with the quotient $\cR^2 / \cZ^2$, such that $L^*$ lifts to the following grid: the vertex of $L^*$ lifts to $\cZ^2$, the loop $\ell_1^*$ lifts to the vertical segments between $(i,j)$ and $(i,j+1)$ for $i,j \in \cZ$, and the loop $\ell_2^*$ lifts to the horizontal segments between $(i,j)$ and $(i+1,j)$ for $i,j \in \cZ$. Orient $\ell_1$ and $\ell_2$ so that, in $\cR^2$, the lifts of $\ell_1$ cross the lifts of $\ell_1^*$ from left to right, and the lifts of $\ell_2$ cross the lifts of $\ell_2^*$ from bottom to top. Let $\torus^{\times}$ be the surface obtained from $\torus$ by removing the vertex of $L^*$. Let $\widetilde L^*$ and $\widetilde L$ be the lifts of $L^*$ and $L$.

First, we define the quasi-geodesic closed walk $D$, without actually computing it. For every $i \in \{1,2\}$, let $k_i$ record the number of times $C$ takes the loop $\ell_i$ in the positive direction, minus the number of times $C$ takes $\ell_i$ in the negative direction. Then $(k_1,k_2) \neq (0,0)$ since $C$ is non-contractible in $\torus$. Consider any point $p = (p_1,p_2) \in \cR^2$ for which the geodesic line $\widetilde \gamma$ containing $p$ and $p' := (p_1+k_1,p_2+k_2)$ does not intersect $\cZ^2$. Then $\widetilde \gamma$ projects to a geodesic closed curve $\gamma$ on $\torus^{\times}$, homotopic to $C$ in $\torus$. Record the sequence of crossings of $\widetilde \gamma$ with the edges of $\widetilde L^*$ between the points $p$ and $p'$. Dually, record a walk $\widetilde D$ in $\widetilde L$. Then $\widetilde D$ projects to a closed walk $D$ on $L$. Moreover, $D$ has no spur and is homotopic to $\gamma$ in $\torus^{\times}$. Thus, $D$ is a quasi-geodesic closed walk, homotopic to $C$ in $\torus$. Also, $D$ has length $O(k_1 + k_2) = O(n)$.

To compute $D$, one must choose a point $(p_1,p_2) \in \cR^2$ for which the geodesic segment between $(p_1,p_2)$ and $(p_1+k_1,p_2+k_2)$ does not intersect $\cZ^2$, and then compute the sequence of crossings of this segment with the edges of the integral grid $\widetilde L^*$. That can be done in $O(n)$ time, using integer arithmetic only, with the standard Bresenham's line algorithm~\cite{5388473}.
\end{proof}

\begin{proposition}\label{prop:key-torus}
On a torus $\torus$, let $L$ be a canonical system of loops. Let $\Sigma$ be the surface obtained from $\torus$ by removing an open disk from the face of $L$. Let $C$ be a collection of quasi-geodesic closed walks in $L$. Then $i_{\torus}(C) = i_{\Sigma}(C)$.
\end{proposition}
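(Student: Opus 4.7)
The plan is to mirror the proof of Proposition~\ref{prop:redux}, using Euclidean straight lines from the flat metric on $\torus$ in place of hyperbolic geodesics (the torus has zero Euler characteristic and admits no hyperbolic metric, but a quasi-geodesic walk is by definition homotopic in $\torus^{\times}$ to a flat geodesic, so the flat structure supplies everything needed). The inclusion $\Sigma \subset \torus$ immediately gives $i_\torus(C) \leq i_\Sigma(C)$. For the converse, I would construct an explicit collection of closed curves in $\Sigma$, homotopic to $C$ in $\Sigma$, with exactly $i_\torus(C)$ crossings.

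After discarding trivial walks, for every $c \in C$ write its abelianization in $\cZ^2 \cong \pi_1(\torus)$ as $(k_1(c),k_2(c)) = n(c)(m_1(c),m_2(c))$ with $(m_1(c),m_2(c))$ primitive, and partition $C$ by primitive slope up to sign. For each primitive slope $(m_1,m_2)$ appearing, pick an offset $p \in \cR^2 \setminus \cZ^2$ so that the Euclidean line through $p$ of slope $(m_1,m_2)$ projects to a simple closed curve $\hat\gamma$ on $\torus$ disjoint from the removed disk; such offsets exist since within the $(m_1,m_2)$-foliation of $\torus$ only countably many leaves pass through $\cZ^2$ and only a proper subset intersects the bounded disk. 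For each $c$ in that class, realize $c$ by $n(c)$ nearly-parallel translates of $\hat\gamma$ connected cyclically into a single closed curve as in Figure~\ref{fig:non-primitive-curves}, inside a thin tubular neighborhood contained in $\Sigma$; this contributes exactly $n(c) - 1$ self-crossings. Place distinct walks of the same primitive class in pairwise disjoint such tubular neighborhoods. Counting crossings via Lemma~\ref{lem:degraaf} with $S = \torus$, together with $i_\torus(\hat c) = 0$ for any primitive $\hat c$ (primitive elements of $\cZ^2$ are represented by simple closed curves on $\torus$), the $n(c)-1$ self-crossings match $i_\torus(c) = n(c)^2 \cdot 0 + n(c) - 1$, the zero crossings between distinct $c,d$ of the same primitive class match $i^*_\torus(c,d) = 2 n(c) n(d) \cdot 0 = 0$, and the $n(c) n(d) |m_1 m'_2 - m_2 m'_1|$ crossings between distinct primitive classes match $i^*_\torus(c,d) = n(c)n(d) \cdot i^*_\torus(\hat c,\hat d)$. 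Summing yields $i_\torus(C)$.

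The main obstacle is certifying that the realization is homotopic to $c$ in $\Sigma$ (equivalently, in $\torus^{\times}$), not merely in $\torus$: the inclusion-induced map $\pi_1(\Sigma) \to \pi_1(\torus)$ has a large kernel, so agreement in $\cZ^2$ alone does not imply homotopy in $\Sigma$. This is precisely where the quasi-geodesic hypothesis is used: by definition $c$ is homotopic in $\torus^{\times}$ to a non-contractible closed flat geodesic, which is a straight line on $\torus$ of slope $(m_1(c), m_2(c))$ traversed $n(c)$ times while missing the puncture; and any two parallel primitive straight lines of the same slope that miss the puncture are homotopic in $\torus^{\times}$, as they belong to a single cylinder of the foliation through which one slides to the other without crossing the puncture. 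Hence the choice of offset $p$ is immaterial up to homotopy, and the $n(c)$-fold perturbation is isotopic in its tubular neighborhood to the $n(c)$-fold cover of $\hat\gamma$, with the isotopy taking place inside $\Sigma$; this pins down the correct homotopy class and concludes the argument.
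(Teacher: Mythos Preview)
Your argument is correct and follows essentially the same route as the paper: realize primitives by flat geodesics in $\torus^{\times}$, then handle powers via the construction of Figure~\ref{fig:non-primitive-curves} and Lemma~\ref{lem:degraaf}. Two minor differences are worth noting. First, where you invoke the explicit determinant formula $|m_1 m_2' - m_2 m_1'|$ for pairwise crossings, the paper instead appeals to the bigon criterion to certify minimality of the geodesic pair---both are standard and equivalent here. Second, to pin down the homotopy class in $\Sigma$, the paper observes that a quasi-geodesic walk $c$ is literally \emph{equal} (as a closed walk in $L$) to the $n$-th power of a primitive quasi-geodesic walk $\hat c$; this follows from uniqueness of spur-free representatives in a graph and makes the homotopy-in-$\Sigma$ step immediate. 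Your foliation argument (sliding parallel leaves past the single punctured leaf) reaches the same conclusion and is perfectly valid, just slightly longer.
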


\begin{proof}
We shall construct a closed curve $\gamma_c$ for every closed walk $c \in C$, homotopic to $c$ in $\Sigma$, in such a way that the collection $\{\gamma_c \;|\; c \in C\}$ is in general position and admits $i_\torus(C)$ crossings. First assume that every closed walk in $C$ is primitive. Let $L^*$ be the dual of $L$ on $\torus$. Endow $\torus$ with a flat metric for which the face of $L^*$ is isometric to the interior of a flat square. Identify the interior of $\Sigma$ with $\torus$ minus the vertex of $L^*$, such that the arcs of $\Sigma$ correspond to the loops of $L^*$. For every $c \in C$, use the assumption that $c$ is quasi-geodesic, and let $\gamma_c$ be a geodesic closed curve homotopic to $c$ in $\Sigma$. Every such $\gamma_c$ is simple since in the universal cover of $\torus$ the lifts of $\gamma_c$ are parallel geodesic lines, so they cannot intersect themself nor other lifts. Moreover, and without loss of generality, for every $c \neq d \in C$, the closed curves $\gamma_c$ and $\gamma_d$ do not overlap (otherwise, perturb them slightly). Also, they cross each other minimally among their homotopy classes in $\torus$. For otherwise, by the bigon criterion~\cite[Proposition~1.7]{fm-pmcg-12}, and since $\gamma_c$ and $\gamma_d$ are simple, they would form a bigon on $\torus$. The two sides of this bigon would be geodesic, a contradiction.

For the general case, for every $c \in C$, let $n \geq 1$ be such that $c$ is homotopic in $\torus$ to the $n^{th}$ power of a primitive closed curve. Then, and since $c$ is quasi-geodesic, $c$ is actually equal to the $n^{th}$ power of a primitive closed walk $\hat c$, where $\hat c$ is quasi-geodesic. Let $\widehat C := \{\hat c \;|\; c \in C \}$. By the previous paragraph, put the closed walks in $\widehat C$ in general position by homotopy in $\Sigma$, so that they intersect $i_{\torus}(\widehat C)$ many times. Then draw each $\gamma_c \in C$ in a neighborhood of $\hat c$ as in Figure~\ref{fig:non-primitive-curves}. The resulting collection $\{\gamma_c \;|\; c \in C\}$ is in general position and admits $i_{\torus}(C)$ crossings by Lemma~\ref{lem:degraaf}. 
\end{proof}

\begin{theorem}\label{thm:main-torus}
Let $M$ be a graph of size $m$ cellularly embedded on a torus $\torus$. Let $C$ be a collection of closed walks of length $n$ in $M$. One may construct a collection $C'$ of closed walks of length $O(mn)$ in $M$, homotopic to $C$, and a perturbation of $C'$ with $i_{\torus}(C)$ self-crossings, all in $O(mn + n \log n)$ time.
\end{theorem}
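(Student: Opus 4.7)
The plan is to adapt the strategy of Theorem~\ref{thm:main-without-boundary} to the torus, with Proposition~\ref{prop:key-torus} replacing Proposition~\ref{prop:redux} and a canonical system of loops $L$ playing the role of the reducing triangulation $T$. First I would fix a canonical system of loops $L$ on $\torus$ and invoke the algorithm of Pocchiola, Lazarus, Vegter, and Verroust~\cite{lpvv-ccpso-01} to embed $L$ in general position with respect to $M$ so that each loop of $L$ crosses each edge of $M$ at most $O(1)$ times; this runs in $O(m)$ time and produces $O(m)$ intersection points in total. Then I would push each closed walk in $C$ onto $L$ by homotopy in $\torus$: because every edge of $M$ meets $L$ in $O(1)$ points, each edge of an input walk becomes a walk of length $O(1)$ in $L$, so the resulting collection $C_L$ of closed walks in $L$ is homotopic to $C$ and has total length $O(n)$, built in $O(m+n)$ time.

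Next I would normalize $C_L$ as follows. For each walk of $C_L$ that is contractible in $\torus$, replace it by the trivial walk at the vertex of $L$; this preserves the homotopy class and contributes nothing either to the intersection number or to the crossings with other walks. For each non-contractible walk, apply Lemma~\ref{lem:quasi-geodesic} to obtain a homotopic quasi-geodesic closed walk in $L$ of length $O(|c|)$. The resulting collection $C_L^{\star}$ consists of trivial and quasi-geodesic walks, all spur-free, of total length $O(n)$ in $L$. I would then apply Theorem~\ref{fulek} to $C_L^{\star}$ on the embedded graph $L$ of size $O(1)$, producing in $O(n \log n)$ time a perturbation $\Gamma$ of $C_L^{\star}$ with a minimum number of self-crossings. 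By Lemma~\ref{lem:perturbations} applied with the patch system of $L$, this number equals $i_{\Sigma}(C_L^{\star})$; and by Proposition~\ref{prop:key-torus}, applied to the subfamily of quasi-geodesic walks (the trivial walks contributing zero on both sides), this equals $i_\torus(C_L^{\star}) = i_\torus(C)$.

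Finally I would pull $C_L^{\star}$ and $\Gamma$ back to $M$ by constructing the overlay of $M$ with the embedded $L$ in $O(m)$ time and performing the edge-contraction procedure from the proof of Theorem~\ref{thm:main-without-boundary} (Figure~\ref{fig:contractions}) to eliminate the vertex of $L$ and the intersection points that subdivide edges of $M$. Since each edge of $L$ is subdivided into $O(m)$ pieces in the overlay, this produces in $O(mn)$ time a collection $C'$ of walks of length $O(mn)$ in $M$, homotopic to $C$, together with a perturbation of $C'$ carrying the same number of self-crossings as $\Gamma$, namely $i_\torus(C)$. The total running time is $O(m + n + n \log n + mn) = O(mn + n \log n)$. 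The main delicate point is the very first step: one must draw $L$ so that every edge of $M$ is crossed $O(1)$ times (not merely $O(m)$ on average), because this bound is what keeps $C_L$ of length $O(n)$ rather than $O(mn)$ in $L$, and thus keeps the Fulek--Tóth call at $O(n \log n)$ rather than $O(mn \log(mn))$.
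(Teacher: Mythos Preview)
Your proposal is correct and achieves the stated bounds, but it takes a heavier route than the paper. You essentially specialize the genus $\geq 2$ machinery of Theorem~\ref{thm:main-without-boundary} to $g=1$: you invoke~\cite{lpvv-ccpso-01} to embed a canonical system of loops $L$ transverse to $M$ with $O(1)$ crossings per edge, push $C$ to $L$, normalize via Lemma~\ref{lem:quasi-geodesic}, run Fulek--T\'oth, and then pull back via the overlay and edge-contraction procedure of Figure~\ref{fig:contractions}. The paper's proof is considerably more direct: it obtains $L$ \emph{from $M$ itself} by contracting a spanning tree and then simplifying the resulting one-vertex graph (merging bigons, deleting monogons, and dropping one loop if two triangular faces remain). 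Since $L$ is then literally a quotient of a subgraph of $M$, pushing $C$ into $L$ (length $O(n)$) and later recovering $C'$ back in $M$ (length $O(mn)$) are both immediate, with no call to~\cite{lpvv-ccpso-01}, no overlay construction, and no contraction procedure. Your approach buys uniformity with the high-genus argument; the paper's approach buys simplicity and avoids two mild technicalities in your outline: that~\cite{lpvv-ccpso-01} is stated for triangulations rather than arbitrary cellular embeddings (you would need to triangulate $M$ first), and that the contraction procedure of Figure~\ref{fig:contractions} is phrased for the auxiliary quadrangulation $Q$ rather than for $M$ directly.
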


\begin{proof}
Let $L$ be obtained from $M$ by contracting a spanning tree, by merging any two edges that bound a bigon, and by deleting any edge that bounds a monogon. Then $L$ has only one vertex and either one quadrangular face, or two triangular faces. In the first case, $L$ is a canonical system of loops on $\torus$. In the second case, removing one loop makes $L$ a canonical system of loops. Push $C$ into a collection $C_1$ of closed walks of length $O(n)$ in $L$. Assume without loss of generality that every closed walk in $C_1$ is non-contractible (for otherwise, find the contractible ones, and handle them separately). Apply Lemma~\ref{lem:quasi-geodesic} to make the closed walks in $C_1$ quasi-geodesic in $O(n)$ time, keeping their length bounded by $O(n)$. Apply Theorem~\ref{fulek} to compute a minimal perturbation $\Gamma$ of $C_1$ in $O(n \log n)$ time. Then $\Gamma$ has $i_{\torus}(C)$ self-crossings by Lemma~\ref{lem:perturbations} and Proposition~\ref{prop:key-torus}. Recover from $C_1$ and $\Gamma$ a collection $C'$ of closed walks of length $O(mn)$ in $M$, and a perturbation of $C'$ without additional crossing, in $O(mn)$ time.
\end{proof}

\begin{corollary}
Let $M$ be a one-vertex graph cellularly embedded on a torus $\torus$. Let $C$ be a collection of closed walks of length $n$ in $M$. One may construct a collection $C'$ of closed walks of length $O(n)$ in $M$, homotopic to $C$, and a perturbation of $C'$ with $i_{\torus}(C)$ self-crossings, all in $O(n \log n)$ time. One may compute $i_{\torus}(C)$ at no extra cost.  
\end{corollary}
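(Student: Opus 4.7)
The plan is to mimic the proof of Corollary~\ref{cor:boundary}: reduce $M$ to a subgraph of constant size via bigon/monogon elimination, then feed the reduced instance into Theorem~\ref{thm:main-torus}. First I would construct a subgraph $M_0$ of $M$ by merging any two edges that bound a bigon and by deleting any edge that bounds a monogon, modifying $C$ accordingly (each occurrence of a deleted or merged edge is rerouted in the obvious way, keeping the total length within $O(n)$, exactly as in the corresponding arguments for the boundary and higher-genus cases).

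Next I would verify that $M_0$ has size $O(1)$. Since $M$ is cellularly embedded on the torus with a single vertex, Euler's formula gives $1 - E + F = 0$, so $F = E - 1$ where $E$ and $F$ are the numbers of edges and faces. After the reduction, every face of $M_0$ is bounded by at least three edge-sides, whence $2E \geq 3F = 3(E-1)$, forcing $E \leq 3$. Thus $M_0$ is either empty (if $C$ is contractible, handled separately) or one of the two constant-size one-vertex embeddings already encountered in the proof of Theorem~\ref{thm:main-torus}, namely a canonical system of loops or such a system with one extra loop.

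I would then invoke Theorem~\ref{thm:main-torus} on $M_0$ and the modified collection: with $|M_0| = O(1)$, the length bound $O(mn)$ becomes $O(n)$ and the time bound $O(mn + n\log n)$ becomes $O(n \log n)$. This yields the desired collection $C'$ of closed walks of length $O(n)$ in $M_0$ (hence in $M$) together with a perturbation of $C'$ achieving $i_{\torus}(C)$ self-crossings. Finally, to get $i_{\torus}(C)$ itself at no extra cost, I would apply Lemma~\ref{lem:compute-nb-crossings} to this perturbation, counting its self-crossings in $O(m + n \log n) = O(n \log n)$ time.

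I do not foresee a real obstacle: every step is a direct specialization of the reasoning already carried out for surfaces with boundary in Corollary~\ref{cor:boundary}, and the constants happen to be particularly small on the torus. The only bookkeeping point worth a sentence of care is that the rerouting around deleted monogons and merged bigons must be done so that the modification of $C$ preserves homotopy class and does not blow up the length, which is routine.
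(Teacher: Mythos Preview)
Your proposal is correct and follows the same approach as the paper's own proof: reduce $M$ to a constant-size subgraph via monogon/bigon elimination and Euler's formula, then apply Theorem~\ref{thm:main-torus} and count crossings with Lemma~\ref{lem:compute-nb-crossings}. The only slip is the parenthetical about $M_0$ possibly being empty---it cannot be, since a one-vertex cellular embedding on the torus always has at least two edges, independently of $C$---but this is harmless.
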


\begin{proof}
Construct a subgraph of $M$ by merging any two edges that bound a bigon, and by deleting any edge that bounds a monogon. This subgraph has size $O(1)$ by Euler formula. Modify $C$ accordingly. Apply Theorem~\ref{thm:main-torus} and Lemma~\ref{lem:compute-nb-crossings}.
\end{proof}

\paragraph{Acknowledgements.}{The author thanks Vincent Despré and Éric Colin de Verdière for many useful discussions.}

\bibliographystyle{plainurl}
\bibliography{bib}
\appendix

\section{Proof of Lemma~\ref{lem:minimal-position-2}}\label{app:minimal-position-2}

Let $S$ be an orientable surface of genus $g \geq 2$ without boundary. Let $\widetilde S$ be its universal cover, and $\pi : \widetilde S \to S$ be the covering map. Each of the following is classical, details and proof can be found in~\cite{untangling-graphs}, see also Farb and Margalit~\cite[Chapter~1]{fm-pmcg-12}.

\begin{lemma}[\cite{untangling-graphs}, Lemma~A.3]\label{lem:deck}
If $\tilde c : \cR \to \widetilde S$ is a lift of a non-contractible closed curve on $S$, then there is an automorphism $\lambda : \widetilde S \to \widetilde S$ such that $\pi \circ \lambda = \pi$ and $\lambda \circ \tilde c(t) = \tilde c(t+1)$ for every $t \in \cR$.
\end{lemma}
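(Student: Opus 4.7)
The plan is to derive this directly from two standard facts about universal covers: (a) $\pi : \widetilde S \to S$ is a regular (normal) covering, so its group of deck transformations acts transitively on every fiber of $\pi$; and (b) a lift of a continuous map along a covering is uniquely determined by its value at a single point.

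First, I would observe that $\tilde c(0)$ and $\tilde c(1)$ lie in the same fiber of $\pi$. Writing $c' : \cR \to S$ for the $1$-periodic map $c'(t) = c(t \bmod 1)$, the definition of a lift gives $\pi \circ \tilde c = c'$, so
\[
\pi(\tilde c(0)) = c'(0) = c'(1) = \pi(\tilde c(1)).
\]
By transitivity of the deck action on the fiber $\pi^{-1}(c'(0))$, there exists a deck transformation $\lambda : \widetilde S \to \widetilde S$ with $\lambda(\tilde c(0)) = \tilde c(1)$. Such a $\lambda$ satisfies $\pi \circ \lambda = \pi$ by the very definition of a deck transformation, so it is the automorphism required by the statement.

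Next, I would verify the functional equation $\lambda \circ \tilde c(t) = \tilde c(t+1)$ for all $t \in \cR$ by a uniqueness-of-lifting argument. Define $f, g : \cR \to \widetilde S$ by $f(t) = \lambda \circ \tilde c(t)$ and $g(t) = \tilde c(t+1)$. Both are continuous and satisfy $\pi \circ f = \pi \circ g = c'$, so both are lifts of $c'$ along $\pi$. Moreover, $f(0) = \lambda(\tilde c(0)) = \tilde c(1) = g(0)$ by our choice of $\lambda$. Uniqueness of path lifting then forces $f \equiv g$ on all of $\cR$, which is exactly the identity claimed.

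I do not anticipate any real obstacle; all of the ingredients are entirely classical. In particular, the hypothesis that $c$ be non-contractible is not actually needed for the proof itself (in the contractible case one may simply take $\lambda = \mathrm{id}$). It is presumably stated so that the resulting $\lambda$ is nontrivial, which is how the lemma gets used downstream (namely, in the proof of Lemma~\ref{lem:minimal-position-2}, where $\lambda$ generates a cyclic group permuting the lifts of a primitive closed curve).
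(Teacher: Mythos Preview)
Your argument is correct and is the standard proof of this fact. Note, however, that the paper does not actually give its own proof of this lemma: it is simply quoted from \cite{untangling-graphs} (their Lemma~A.3) and used as a black box in Appendix~\ref{app:minimal-position-2}. There is therefore nothing to compare against, but what you wrote is precisely the classical deck-transformation-plus-uniqueness-of-lifting argument one would expect, and your side remark about the non-contractibility hypothesis being inessential to the existence of $\lambda$ (only to its nontriviality) is also accurate.
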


The universal cover $\widetilde S$ of $S$ can be compactified into a topological space $\widetilde S \cup \partial \widetilde S$ by adding a set $\partial \widetilde S$ of \emphdef{limit points}. The compactified space $\widetilde S \cup \partial \widetilde S$ is homeomorphic to a closed disk. Under this homeomorphism $\widetilde S$ is mapped to the interior of the disk, and $\partial \widetilde S$ is mapped to the boundary of the disk. Such a construction exists that satisfies each of the following:

\begin{lemma}[\cite{untangling-graphs}, Lemma~4.4]\label{lem:exist-limit-points}
If $\tilde c : \cR \to \widetilde S$ is a lift of a non-contractible closed curve on $S$, then $\lim_{+\infty} \tilde c$ and $\lim_{-\infty} \tilde c$ exist (in $\widetilde S \cup \partial \widetilde S$) and are distinct points of $\partial \widetilde S$.
\end{lemma}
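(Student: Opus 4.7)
The plan is to exploit the hyperbolic structure on $S$, which is available since $g \geq 2$, and identify the compactification $\widetilde S \cup \partial \widetilde S$ with the closed Poincaré disk model of $\mathbb{H}^2 \cup \partial \mathbb{H}^2$. Under this identification, every non-trivial deck transformation is an orientation-preserving isometry of $\mathbb{H}^2$ that acts freely on $\widetilde S$; because $S$ is closed, there are no parabolic elements in the deck group, so each non-trivial deck transformation is hyperbolic, with two distinct fixed points on $\partial \widetilde S$, one attracting and one repelling.

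Apply Lemma~\ref{lem:deck} to produce a deck transformation $\lambda : \widetilde S \to \widetilde S$ with $\lambda \circ \tilde c(t) = \tilde c(t+1)$. Since $c$ is non-contractible, $\lambda$ is non-trivial, hence hyperbolic; let $A,R \in \partial \widetilde S$ denote its attracting and repelling fixed points, with $A \neq R$. The standard dynamics of hyperbolic Möbius transformations (see Farb--Margalit, Chapter~1) give that $\lambda^n \to A$ uniformly on $(\widetilde S \cup \partial \widetilde S) \setminus U$ as $n \to +\infty$ for every open neighborhood $U$ of $R$, and symmetrically $\lambda^n \to R$ uniformly away from $A$ as $n \to -\infty$.

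Now let $K := \tilde c([0,1])$, a compact subset of $\widetilde S$ (continuous image of a compact set), so in particular $K$ is disjoint from both $A$ and $R$. For any $t \in \cR$, write $t = n + s$ with $n := \lfloor t \rfloor$ and $s \in [0,1)$, so that $\tilde c(t) = \lambda^n(\tilde c(s))$ with $\tilde c(s) \in K$. The uniform convergence above yields $\tilde c(t) \to A$ as $t \to +\infty$ and $\tilde c(t) \to R$ as $t \to -\infty$, both inside the compactification $\widetilde S \cup \partial \widetilde S$. Since $A$ and $R$ are distinct points of $\partial \widetilde S$, this proves the lemma.

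The main obstacle is checking compatibility between the abstract compactification used in the paper and the hyperbolic compactification: one needs to verify that $\widetilde S \cup \partial \widetilde S$ can legitimately be taken to be the closed Poincaré disk and that deck transformations extend continuously to $\partial \widetilde S$ with the stated attracting/repelling dynamics. This is classical but requires a small amount of bookkeeping; once that is set up, the argument above is essentially a one-line consequence of hyperbolic dynamics combined with Lemma~\ref{lem:deck}.
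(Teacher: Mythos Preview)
Your argument is correct and is the standard hyperbolic-geometry proof of this fact. Note, however, that the paper does not actually prove this lemma: it is quoted verbatim from~\cite{untangling-graphs} (Lemma~4.4 there) and from Farb--Margalit~\cite[Chapter~1]{fm-pmcg-12}, precisely because the argument you give is classical. So there is no ``paper's own proof'' to compare against here; your write-up is essentially what one would find in those references, and it is sound.

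One small remark: you correctly flag the compatibility issue between the abstract compactification $\widetilde S \cup \partial \widetilde S$ and the hyperbolic boundary. In the present paper this is a non-issue, because the paragraph introducing $\partial \widetilde S$ explicitly says ``such a construction exists that satisfies each of the following'' and then lists Lemmas~\ref{lem:exist-limit-points}--\ref{lem:distinct-limit-points}. In other words, the paper \emph{defines} $\partial \widetilde S$ to be a compactification for which these lemmas hold (the hyperbolic one being the intended model), so you may take the identification with the closed Poincar\'e disk as given rather than something to be verified.
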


\begin{lemma}[\cite{untangling-graphs}, Lemma~4.5]\label{lem:limit-points-homotopy}
Lift a homotopy $c \simeq d$ between non-contractible closed curves on $S$ to a homotopy $\tilde c \simeq \tilde d$ between lifts of $c$ and $d$. Then $\tilde c$ and $\tilde d$ have the same limit points.
\end{lemma}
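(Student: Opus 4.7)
The plan is to show that $\tilde c$ and $\tilde d$ are stabilized by the same deck transformation $\lambda$ (in the sense of Lemma~\ref{lem:deck}), and then invoke the ``north-south'' dynamics of $\lambda$ on $\widetilde S \cup \partial \widetilde S$ to pin down the forward and backward limits.

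First, I would lift the free homotopy $H : (\cR/\cZ) \times [0,1] \to S$ between $c$ and $d$ to a continuous map $\widetilde H : \cR \times [0,1] \to \widetilde S$ with $\widetilde H(\cdot, 0) = \tilde c$ and $\widetilde H(\cdot, 1) = \tilde d$. For every fixed $s \in [0,1]$, the paths $t \mapsto \widetilde H(t+1, s)$ and $t \mapsto \widetilde H(t, s)$ are two lifts of the same closed curve $H(\cdot, s)$ in $S$, so they differ by a unique deck transformation $\lambda(s)$ satisfying $\lambda(s) \circ \widetilde H(t, s) = \widetilde H(t+1, s)$ for every $t$. The map $s \mapsto \lambda(s)$ is continuous (because $\widetilde H$ is) and takes values in the discrete deck group, so it is constant. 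Calling $\lambda$ its common value, we get simultaneously $\lambda \circ \tilde c(t) = \tilde c(t+1)$ and $\lambda \circ \tilde d(t) = \tilde d(t+1)$ for every $t \in \cR$.

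Next, Lemma~\ref{lem:exist-limit-points} gives that $\lim_{t \to +\infty} \tilde c(t)$ and $\lim_{t \to +\infty} \tilde d(t)$ both exist in $\partial \widetilde S$. Because $\tilde c(n) = \lambda^n \circ \tilde c(0)$ and $\tilde d(n) = \lambda^n \circ \tilde d(0)$ for $n \in \cZ$, these forward limits are simply the limits of the positive $\lambda$-orbits of the two starting points $\tilde c(0), \tilde d(0) \in \widetilde S$. I expect the main obstacle to be justifying that any positive $\lambda$-orbit in $\widetilde S$ converges to one and the same fixed point of the extension of $\lambda$ to $\partial \widetilde S$, independently of the starting point: this classical north-south dynamics is immediate when $\widetilde S$ is realized as the hyperbolic plane and $\lambda$ is a hyperbolic isometry, and the topological compactification used in~\cite{untangling-graphs} is tailored to inherit it. Granting this, both forward limits coincide with the unique attracting fixed point of $\lambda$ on $\partial \widetilde S$; the backward limits are handled symmetrically by applying the same reasoning to $\lambda^{-1}$, whose attracting fixed point is the repelling fixed point of $\lambda$.
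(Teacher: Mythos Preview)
The paper does not supply its own proof of this lemma; it simply quotes it from~\cite{untangling-graphs}. So there is no ``paper proof'' to compare against, and your sketch must be judged on its own merits.

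Your argument is correct and follows the standard route. The reduction to a single deck transformation $\lambda$ is clean: lifting $H$ to $\widetilde H$ on the simply connected domain $\cR\times[0,1]$, observing that $\widetilde H(\cdot+1,s)$ and $\widetilde H(\cdot,s)$ are lifts of the same map, and using proper discontinuity to see that the resulting $\lambda(s)$ is locally constant in $s$ is exactly right. The remaining step---that the forward $\lambda$-orbit of any interior point converges to the attracting fixed point of $\lambda$ on $\partial\widetilde S$---is indeed the classical north--south dynamics of a hyperbolic isometry, and you are right that it holds for the compactification used here (which is the visual/Gromov boundary of $\mathbb{H}^2$).

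One remark that may make the last step feel less like an external black box: once you know $\lambda$ stabilises both lifts, the map $t\mapsto \widetilde H(t,\cdot)$ is $1$-periodic up to the isometry $\lambda$, so the paths $s\mapsto\widetilde H(t,s)$ all have the same (finite) hyperbolic length. Hence $\tilde c(t)$ and $\tilde d(t)$ stay at uniformly bounded hyperbolic distance as $t\to\pm\infty$. In the hyperbolic disk, two sequences going to infinity at bounded distance from each other converge to the same boundary point; this is the same content as north--south dynamics but avoids explicitly extending $\lambda$ to $\partial\widetilde S$.
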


\begin{lemma}[\cite{untangling-graphs}, Lemma~4.6]\label{lem:distinct-limit-points}
Consider lifts $\tilde c$ and $\tilde d$ of non-contractible closed curves on $S$. Assume that $\tilde c$ and $\tilde d$ intersect exactly once. Then the four limit points of $\tilde c$ and $\tilde d$ are pairwise distinct.
\end{lemma}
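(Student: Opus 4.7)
The strategy is to pass to geodesic representatives and then exploit the rigidity of geodesics in the hyperbolic plane together with the proper-homotopy invariance of the mod~$2$ intersection count. Endow $S$ with a complete hyperbolic metric, which exists because $g \geq 2$. Since $c$ and $d$ are non-contractible, by the facts recalled in Section~\ref{sec:background-hyperbolic} they are freely homotopic to unique geodesic closed curves $c^*$ and $d^*$. Lifting the free homotopies $c \simeq c^*$ and $d \simeq d^*$ yields proper homotopies $\tilde c \simeq \tilde c^*$ and $\tilde d \simeq \tilde d^*$ in $\widetilde S$, where $\tilde c^*$ and $\tilde d^*$ are now complete geodesic lines in the hyperbolic plane $\widetilde S$; by Lemma~\ref{lem:limit-points-homotopy}, $\tilde c^*$ and $\tilde d^*$ inherit the same limit points as $\tilde c$ and $\tilde d$, respectively.

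Assume for contradiction that $\tilde c$ and $\tilde d$ share a limit point, say $\lim_{+\infty} \tilde c = \lim_{+\infty} \tilde d =: p \in \partial \widetilde S$. Then the geodesic lines $\tilde c^*$ and $\tilde d^*$ also share $p$ as an ideal endpoint. Since a complete geodesic in $\widetilde S$ is uniquely determined by its pair of ideal endpoints, one of two situations occurs: either (i) $\tilde c^*$ and $\tilde d^*$ are distinct geodesics sharing only $p$, and are hence asymptotic parallel, so disjoint; or (ii) $\tilde c^*$ and $\tilde d^*$ have both ideal endpoints in common and coincide as subsets of $\widetilde S$. In case (ii), I would replace $\tilde d^*$ by a curve $\tilde d^{**}$ equidistant from $\tilde c^*$ at some small positive distance: such an equidistant curve shares the two ideal endpoints of $\tilde c^*$, is disjoint from $\tilde c^*$, and is properly homotopic to $\tilde d^*$ rel its limit points. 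In either case, one obtains proper representatives of the proper-homotopy classes of $\tilde c$ and $\tilde d$ (with limit points fixed) that are disjoint.

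To finish, I would invoke the standard fact that the mod~$2$ geometric intersection count of two proper transverse curves $\cR \to \widetilde S$ with prescribed distinct pairs of limit points on $\partial \widetilde S$ depends only on the proper-homotopy classes (rel limit points) of the two curves: in any generic one-parameter proper deformation, transverse intersections can appear or disappear only in pairs. Applied to the proper homotopies constructed above, this forces $|\tilde c \cap \tilde d| \equiv 0 \pmod 2$, contradicting the hypothesis that $\tilde c$ and $\tilde d$ meet at exactly one point. The hard part will be rigorously justifying proper-homotopy invariance in this non-compact setting, and in particular verifying in case (ii) that the equidistant perturbation can be realized by a proper homotopy that does not move the two ideal endpoints, so that the parity-preservation argument applies.
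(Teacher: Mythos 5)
There is a genuine gap, and it sits exactly where you flag ``the hard part'': the claimed proper-homotopy invariance of the mod~$2$ intersection count is false in the configuration you need it for, namely when $\tilde c$ and $\tilde d$ share a limit point. Parity of transverse intersections is preserved under a proper homotopy only when the intersection locus stays in a compact region; when the two curves converge to a common point of $\partial \widetilde S$, intersection points can escape to infinity through that shared ideal point, changing the parity. Concretely, in the half-plane model let $\alpha$ be the geodesic with ideal endpoints $0$ and $\infty$, and let $\beta$ run from the ideal point $1$ up along $x=1$, cross over to $x=-1$ at height $h$, and continue up to $\infty$: then $\beta$ meets $\alpha$ exactly once, yet $\beta$ is properly homotopic, with both ideal endpoints fixed, to the geodesic from $1$ to $\infty$, which is disjoint from $\alpha$ (push the crossing up to infinity). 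So from ``the geodesic representatives are disjoint'' you cannot conclude that $\tilde c$ and $\tilde d$ meet an even number of times; the parity argument is only valid when the four endpoints are already pairwise distinct, which is the conclusion you are trying to prove. (A secondary, repairable point: the properness of the lifted homotopies needs justification, but since they are lifts of homotopies of closed curves on a closed surface this is an equivariance/compactness argument; it is not the real obstruction.)

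What is missing is the ingredient that actually rules out shared limit points: the deck transformation group and its discreteness. The limit points of $\tilde c$ and $\tilde d$ are the fixed points at infinity of the deck transformations $\lambda_c,\lambda_d$ given by Lemma~\ref{lem:deck}. If two such (hyperbolic) elements of a discrete, torsion-free cocompact group shared exactly one fixed point, the group would not be discrete; hence a shared limit point forces a shared axis, i.e.\ $\lambda_c$ and $\lambda_d$ are powers of a common element $\mu$, and both $\tilde c$ and $\tilde d$ are invariant under a common power of $\mu$. Applying the powers of that common deck transformation to the single intersection point produces infinitely many distinct intersection points (the action is free and $\mu$ has infinite order), contradicting the hypothesis that $\tilde c$ and $\tilde d$ cross exactly once. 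Note that your case~(ii) can be dispatched this way directly, with no geodesics or parity at all, while your case~(i) (exactly one shared endpoint, asymptotic geodesics) is precisely the case that only discreteness excludes --- and your argument never invokes it. For what it is worth, the paper itself does not reprove this statement; it cites it as Lemma~4.6 of~\cite{untangling-graphs}, whose proof follows the deck-transformation route sketched above rather than a parity argument.
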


The following proof of Lemma~\ref{lem:minimal-position-2} reformulates classical arguments, see~\cite{r-ajccs-62} and~\cite[Lemma~4]{dl-cginc-19}.

\begin{proof}[Proof of Lemma~\ref{lem:minimal-position-2}]
We let $\rho : \cR \to \cR / \mathbb{Z}$ be the usual quotient map. Consider any two primitive closed curves $a$ and $b$ in general position on $S$. Fix a lift $\tilde a : \cR \to \widetilde S$ of $a \circ \rho$. Let $\tau : \cR \to \cR$ be defined by $\tau(t) = t+1$ for every $t \in \cR$. There is by Lemma~\ref{lem:deck} an automorphism $\lambda : \widetilde S \to \widetilde S$ such that $\lambda \circ \tilde a = \tilde a \circ \tau$. By Lemma~\ref{lem:exist-limit-points} $\tilde a$ admits two distinct limit points $p_1 \neq p_2 \in \partial \widetilde S$. Let $B$ contain the lifts of $b \circ \rho$ that admit two limit points $q_1,q_2 \in \partial \widetilde S$ for which $p_1,q_1,p_2,q_2$ are pairwise-distinct and in this order along $\partial \widetilde S$. The additive group $\mathbb{Z}^2$ acts on $B$ by mapping every $(i,j) \in \mathbb{Z}^2$ and every $\tilde b \in B$ to $\lambda^i \circ \tilde b \circ \tau^j \in B$. Let $N$ be the number of orbits of this action ($N$ does not depend on the choice of the lift $\tilde a$, but this is not needed in the proof). We have three claims that infer the result altogether.

Our first claim is that $N$ is smaller than or equal to the number of crossings between $a$ and $b$. To prove this claim observe that for any $\tilde b_1, \tilde b_2 \in B$, there are $s_1,t_1,s_2,t_2\in \cR$ such that $\tilde a(s_1) = \tilde b_1(t_1)$ and $\tilde a(s_2) = \tilde b_2(t_2)$. By definition, those points of $\widetilde S$ project on $S$ to crossing points between $a$ and $b$. Now, assuming that they project to the same crossing point, we shall prove that $\tilde b_1$ and $\tilde b_2$ belong to the same orbit in $B$. To do so first observe that, since $a$ and $b$ are primitive and in general position, we have $s_1 - s_2 \in \mathbb{Z}$ and $t_1 - t_2 \in \mathbb{Z}$. Let $i := s_2 - s_1$ and $j := t_1 - t_2$. Holds $\tilde b_2(t_2) = \tilde a(s_2) = \tilde a \circ \tau^i(s_1) = \lambda^i \circ \tilde a(s_1) = \lambda^i \circ \tilde b_1(t_1) = \lambda^i \circ \tilde b_1 \circ \tau^j(t_2)$. The maps $\tilde b_2$ and $\lambda^i \circ \tilde b_1 \circ \tau^j$ are both lifts of $b \circ \rho$, and they agree on $t_2$. Thus $\tilde b_2 = \lambda^i \circ \tilde b_1 \circ \tau^j$ by the uniquess part of the lifting property. That proves the first claim.

Our second claim is that if we assume that a lift of $a$ never crosses a lift of $b$ more than once, then $N$ is actually equal to the number of crossings between $a$ and $b$. To prove this claim, lift every such crossing between $a$ and $b$ to a crossing between $\tilde a$ and some lift $\tilde b$ of $b$. Every such lift $\tilde b$ belongs to $B$, since $\tilde a$ and $\tilde b$ cross only once so Lemma~\ref{lem:distinct-limit-points} applies. Now consider any two $\tilde b_1, \tilde b_2 \in B$ in the same orbit in $B$, and $s_1,t_1,s_2,t_2 \in \cR$ such that $\tilde a(s_1) = \tilde b(t_1)$ and $\tilde a(s_2) = \tilde b(t_2)$. We shall prove that those two crossing points in $\widetilde S$ project to the same crossing point in $S$, by proving that $s_1 - s_2 \in \mathbb{Z}$ and $t_1 - t_2 \in \mathbb{Z}$. There are $i, j \in \mathbb{Z}$ such that $\lambda^i \circ \tilde b_1 \circ \tau^j = \tilde b_2$. Then $\tilde a \circ \tau^{-i}(s_2) =  \lambda^{-i} \circ \tilde a(s_2) = \lambda^{-1} \circ \tilde b_2(t_2) = \tilde b_1 \circ \tau^j(t_2)$. Thus, and since $\tilde a$ and $\tilde b_1$ cross only once, we have $\tau^{-i}(s_2) = s_1$ and $\tau^j(t_2) = t_1$.

For stating our third claim, consider two homotopies $a \simeq a'$ and $b \simeq b'$, where $a'$ and $b'$ are closed curves in $S$. By the lifting property, the homotopy $a \simeq a'$ lifts to a homotopy $\tilde a \simeq \tilde a'$, where $\tilde a' : \cR \to \widetilde S$ is a lift of $a'$. The uniqueness part of the lifting property gives $\tilde a' = \lambda^{-1} \circ \tilde a' \circ \tau$ from the fact that $\tilde a = \lambda^{-1} \circ \tilde a \circ \tau$. In particular $\lambda \circ \tilde a' = \tilde a' \circ \tau$. As before, let $B'$ contain the lifts of $b'$ whose limit points alternate with those of $\tilde a'$ on $\partial \widetilde S$. The additive group $\mathbb{Z}^2$ acts on $B'$ by mapping every $(i,j) \in \mathbb{Z}^2$ and every $\tilde b' \in B'$ to $\lambda^i \circ \tilde b' \circ \tau^j \in B'$. Let $N'$ be the number of orbits of this action.

Our third claim is that $N \leq N'$. To prove this claim consider $\tilde b_1, \tilde b_2 \in B$ (if any). By the lifting property, the homotopy $b \simeq b'$  lifts to two homotopies $\tilde b_1 \simeq \tilde b_1'$ and $\tilde b_2 \simeq \tilde b_2'$, where $\tilde b_1', \tilde b_2' : \cR \to \widetilde S$ are lifts of $b'$. Lemma~\ref{lem:limit-points-homotopy} ensures that $\tilde b_1' \in B$ and $\tilde b_2' \in B$. If there are $i, j \in \mathbb{Z}$ such that $\lambda^i \circ \tilde b_1' \circ \tau^j = \tilde b_2'$, then the uniqueness part of the lifting property gives $\lambda^i \circ \tilde b_1 \circ \tau^j = \tilde b_2$.
\end{proof}

\section{Proof of Theorem~\ref{fulek}, following Fulek and Tóth}\label{app:fulek}

The result of Fulek and Tóth, Theorem~\ref{fulek}, has a different statement in their paper~\cite[Theorem~1]{fulek2020crossing}. In particular, the embedded graph lies in the plane, and only one closed walk is given as input. Nevertheless, their arguments easily extend to our setting. For completeness, we adapt their arguments~\cite[Section~3]{fulek2020crossing} to prove Theorem~\ref{fulek}. We insist that we do not introduce any new idea or result here.

In the whole section, we consider the setting of Theorem~\ref{fulek}: a collection $C$ of closed walks of length $n$ in an embedded graph $M$, without spur. We will prove Theorem~\ref{fulek} by constructing a minimal perturbation of $C$ in $O(m + n \log n)$ time. We assume that $M$ has no loop-edge nor any parallel-edges. For otherwise, construct $M'$ from $M$ by inserting two vertices in every edge. The collection $C$ becomes a collection $C'$ of closed walks of length $O(n)$ in $M'$, without spur. Construct a minimal perturbation $\Gamma'$ of $C'$. Easily derive from $\Gamma'$ a perturbation of $C$ with no more self-crossings than $\Gamma'$. 

We start with an initial perturbation $\Gamma$ of $C$. We call \emphdef{pebbles} the crossings of $\Gamma$ with the arcs of $\Sigma$, and we see the closed curves in $\Gamma$ as cycles of pebbles. Along any arc $a$ of $\Sigma$, the corresponding pebbles are linearly ordered along $a$. The problem is to reorder the pebbles along $a$, and that for every arc $a$ of $\Sigma$, so that in the end $\Gamma$ has minimum self-crossing.

We slightly modify the problem as follows. We partition every arc into a concatenation of sub-arcs. The partition of the pebbles into the sub-arcs and the linear orderings of the pebbles along the sub-arcs constitute an \emphdef{arrangement}. When reordering the pebbles, we are only allowed to reorder within the sub-arcs (pebbles cannot be exchanged between distinct sub-arcs). If reordering the pebbles this way can still make $\Gamma$ have minimum self-crossing, then the arrangement is \emphdef{valid}.

The \emphdef{split} operation modifies an arrangement as follows. It plays the role of the ``cluster expansion'' and ``pipe expansion'' operations of Fulek and Tóth. Let $D$ be a disk of $\Sigma$. 
Let $p_0, \dots, p_n$ for some $n \geq 1$ be the sub-arcs of the arrangement that bound $D$, in clockwise order, where $p_0$ is arbitrary. 
For every $1 \leq i \leq n$, let $X_i$ contain the pebbles of $p_0$ that are linked to a pebble of $p_i$ via an edge of $\Gamma$ that runs through $D$. If only one of the sets $X_1, \dots, X_n$ is not empty, then the split is not defined. Otherwise, the split cuts $p_0$ into sub-arcs $q_1, \dots, q_n$, in counter-clockwise order with respect to the orientation of $D$. Also, for every $1 \leq i \leq n$, the split places the pebbles in $X_i$ along $q_i$, in any order. Any newly-created sub-arc that does not contain a pebble is discarded.

\begin{lemma}\label{L:valid}
If an arrangement is valid before a split, then it is valid after the split.
\end{lemma}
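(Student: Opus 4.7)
The plan is an exchange argument: from any optimal ordering $\Gamma^*$ respecting the pre-split arrangement, I will modify only the order of pebbles along $p_0$ to produce an ordering with the same number of self-crossings of $\Gamma$ that respects the post-split arrangement, thereby proving its validity. The target configuration on $p_0$ is the one in which the pebbles of $X_i$ occupy $q_i$ for every $i$, with the internal order inside each $X_i$ inherited from $\Gamma^*$. I would reach it from $\Gamma^*$ by a bubble-sort: repeatedly find an adjacent pair $u \in X_i$, $v \in X_j$ with $i < j$ such that $u$ currently precedes $v$ in the clockwise direction of $\partial D$ along $p_0$, and swap them. Since the number of such inversions is finite and strictly decreases with every swap, the process terminates.

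The central step is to verify that no such swap increases the total number of self-crossings. Let $D'$ be the other disk of the patch system bordering $p_0$. Since $u$ and $v$ are exchanged adjacently, the relative cyclic position of every other pebble is preserved, so the only crossings that may change are the one between the $D$-chords of $u$ and $v$ and the one between their $D'$-chords; all chords in other disks, and all chords not involving $u$ or $v$, are unaffected. An alternation analysis on $\partial D$ shows that, prior to the swap, the clockwise cyclic order of the four endpoints is $u,v,\ldots,u',\ldots,v'$ with $u' \in p_i$ and $v' \in p_j$, so the two $D$-chords alternate and therefore cross; after the swap the order becomes $v,u,\ldots,u',\ldots,v'$, which no longer alternates, so the crossing disappears. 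The swap thus reduces the crossings inside $D$ by exactly one, while in $D'$ the corresponding pair of chords can at worst gain one crossing, yielding a net change of at most $-1 + 1 = 0$ in the total number of self-crossings of $\Gamma$.

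The hard part will be ensuring that this simple accounting is complete, and in particular handling the edge case in which a pebble on $p_0$ has its $D$-chord returning to $p_0$ itself and thus belongs to no $X_i$: such a pebble is not assigned to any $q_i$ by the split as stated. I plan to address this either by appealing to a preprocessing step (analogous to Fulek and T\'oth's pipe expansion) that removes such $D$-self-chords before any split is applied, or by extending the definition of the split to place each unassigned pebble in an arbitrary $q_i$ and re-running the alternation analysis to check that the swap argument still goes through. Once this edge case is settled, the argument above produces an ordering respecting the post-split arrangement with at most as many self-crossings as $\Gamma^*$; since $\Gamma^*$ was optimal, so is the resulting ordering, and validity is preserved.
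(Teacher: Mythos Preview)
Your approach is essentially the paper's: start from an optimal ordering compatible with the pre-split arrangement and bubble-sort the pebbles on $p_0$ by their $X_i$-index, noting that each adjacent swap removes one crossing inside $D$ and creates at most one in the neighbouring disk, so the total never increases; the paper states this swap claim without the alternation analysis you supply. The edge case you flag cannot occur: the appendix assumes $M$ has no loop-edges and the closed walks have no spurs, so every $D$-chord issued from a pebble on $p_0$ lands on a different arc of $\Sigma$, hence on some $p_i$ with $i\geq 1$, and every pebble of $p_0$ lies in exactly one $X_i$.
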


\begin{proof}
Consider the arrangement before the split. In this arrangement, using the assumption that the arrangement is valid, reorder the pebbles within their sub-arcs so that $\Gamma$ has minimum self-crossing. Name $p$ the sub-arc to be split; $p$ is to be divided into sub-arcs $q_1, \dots, q_n$ for some $n \geq 2$. Assign an orientation to $p$, so that $q_1, \dots, q_n$ are in this order along $p$. For every $1 \leq i \leq n$, let $X_i$ contain the pebbles of $p$ that are to be placed in $q_i$. In our arrangement, consider the pebbles along $p$, and let $f : [k] \to [n]$ be the correspondence that sends every $l \in [k]$ to the unique $f(l) \in  [n]$ such that the $l$-th pebble along $p$ belongs to $X_{f(l)}$. If $f(l) > f(l+1)$ for some $1 \leq l < k$, then the $l$-th and the $l+1$-th pebbles can be swapped in our arrangement without increasing the number of self-crossings of $\Gamma$. Thus, without loss of generality, $f$ can be assumed non-decreasing. Then, the sub-arcs $q_1, \dots, q_n$ can be realized on $p$ (without touching to the order of the pebbles on $p$) so that, for every $i \in [n]$, the sub-arc $q_i$ contains the pebbles in $X_i$.
\end{proof}

The \emphdef{sub-arcs graph} of an arrangement is the graph whose vertices are the sub-arcs of the arrangement, and where two sub-arcs $p$ and $q$ are linked by an edge when there is a pebble in $p$ and a pebble in $q$ that are consecutive in some closed curve of $\Gamma$.

\begin{lemma}\label{L:endcase}
In a valid arrangement, if no split applies, then the pebbles can be re-ordered in $O(m + n \log n)$ time within their respective sub-arcs to make $\Gamma$ have minimum self-crossing.
\end{lemma}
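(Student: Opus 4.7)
The plan is to exploit the ``no split applies'' hypothesis to pin down a very rigid \emph{pipe} structure for the arrangement, and then to reduce the optimization of self-crossings to independent sorting subproblems, one per connected component of the sub-arcs graph.

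The first step is to establish the pipe structure. Within every disk $D$ of $\Sigma$, applying the no-split hypothesis with each sub-arc in turn playing the role of $p_0$ in the definition, we obtain that each non-empty sub-arc on $\partial D$ links its pebbles, via chords of $\Gamma$ crossing $D$, to pebbles on at most one other sub-arc on $\partial D$. The chords of $D$ therefore partition into disjoint \emph{pipes}, each pipe being a bundle of chords between a paired pair of sub-arcs. Because every pebble on a sub-arc $p$ has curve-neighbors in each of the two disks bordering $p$, the sub-arc $p$ participates in exactly one pipe on each side, and so has degree at most $2$ in the sub-arcs graph. Hence this graph is (up to isolated empty sub-arcs) a disjoint union of cycles, possibly degenerate such as self-loops or length-two double edges. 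Moreover the self-crossings of $\Gamma$ inside each disk split into \emph{inter-pipe} crossings, fully determined by the pairing of sub-arcs and independent of the reordering, plus \emph{intra-pipe} crossings, which depend only on the orderings of the pebbles along the two sub-arcs of the pipe concerned. Distinct cycles of the sub-arcs graph involve disjoint pipes, so the intra-pipe crossings decompose as an independent sum over cycles.

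The second step is to solve each cycle by a direct sort. Closed walks of $\Gamma$ restricted to a cycle $p_1, \ldots, p_r$ are pure windings, so every closed curve $c \in \Gamma$ on that cycle winds around it some number $n_c \geq 1$ of times, with $n_c$ pebbles per sub-arc. On each sub-arc, I would order the pebbles first by the identity of the closed curve they belong to (under a fixed global ordering of the curves) and then, within each single-curve block, by the curve's \emph{phase}, that is, the number of prior visits of the curve to that sub-arc. A direct check shows that every pipe then has no intra-pipe crossings, except the single ``closing'' pipe of each curve $c$, inside which the chords of $c$ realize a cyclic shift by one within its own block and contribute exactly $n_c - 1$ crossings. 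The total $\sum_c (n_c - 1)$ per cycle matches the standard lower bound for self-crossings on the annular strip around the cycle, so the sort achieves the minimum; an exchange argument in the spirit of Lemma~\ref{L:valid} then rules out any lower total.

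The main obstacle will be carefully proving the per-cycle lower bound---especially justifying that two distinct curves winding around the same cycle can always avoid crossing each other since they are parallel on the corresponding annulus regardless of winding direction, that a single curve winding $n$ times forces $n-1$ self-crossings, and that the degenerate cycles (self-loops or length-two double edges) corresponding to contractible homotopy classes are handled correctly by the validity hypothesis of the arrangement. On the complexity side, building the sub-arcs graph and computing the sort keys takes $O(m+n)$ time, the sorts cost $O(n \log n)$ in total, and the final crossing count is produced in $O(n \log n)$ time by Lemma~\ref{lem:compute-nb-crossings}, yielding an overall $O(m + n \log n)$.
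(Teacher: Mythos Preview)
Your approach is correct and essentially identical to the paper's: both deduce that the sub-arcs graph is a disjoint union of cycles, recognize each closed curve of $\Gamma$ as a power of the primitive loop around its cycle, and order the pebbles as in Figure~\ref{fig:non-primitive-curves} to achieve $\sum_c (n_c-1)$ crossings per cycle. Two small notes: under the appendix's standing assumption that $M$ has no loop-edges or parallel edges (and $C$ no spurs), the degenerate length-one and length-two cycles you worry about cannot arise; and the passage from ``minimum over sub-arc-respecting reorderings'' to ``minimum over all perturbations'' is exactly the validity hypothesis, not something to re-derive by an exchange argument---the paper simply cites de~Graaf--Schrijver for the per-cycle optimum and then invokes validity directly.
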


\begin{proof}
If no split applies, then the sub-arcs graph of the arrangement is a collection of disjoint cycles. Deal with the cycles independently. The sub-arcs along a cycle $O$ contain the pebbles of a subset $\Gamma_0$ of closed curves from $\Gamma$. The closed curves in $\Gamma_0$ are all powers of the single closed curve that one obtains by making one loop around $O$. Re-order the pebbles within the sub-arcs of $O$, so that each closed curve in $\Gamma_0$ is as in Figure~\ref{fig:non-primitive-curves}, and so that any two closed curves do not intersect more than necessary. In the end, no re-ordering of the pebbles within their respective sub-arcs could induce less self-crossings of $\Gamma$, see e.g.~\cite[Theorems~6-7]{DEGRAAF1997134}. Thus, and since the arrangement is valid, $\Gamma$ has minimum self-crossing.
\end{proof}

\begin{proof}[Proof of Theorem~\ref{fulek}]
We assume, without loss of generality, that $M$ has no loop-edge nor any parallel-edges (see above). The overall algorithm is the following. Initialize an arrangement by considering an arbitrary perturbation of $C$. At this point, the arrangement is valid. Perform splits on it as long as possible. In the end, the arrangement is still valid by Lemma~\ref{L:valid}. Conclude by Lemma~\ref{L:endcase}. All there remains to do is to describe how to perform all the splits in $O(m + n \log n)$ total time.

We maintain the sub-arc graph $G$. Also, for every edge of $G$ between two sub-arcs $p$ and $q$, we maintain a list of the edges of $\Gamma$ between the pebbles in $p$ and the pebbles in $q$, and an integer that records the size of the list. Finally, we maintain the list of the sub-arcs that can be split.

Suppose we split a sub-arc $p_0$ along a disk $D$ of $\Sigma$. The $N \geq 2$ pebbles on $p_0$ are linked to pebbles in sub-arcs $p_1, \dots, p_k$ bounding $D$, for some $k \geq 2$. For each $i \in \{1,\dots,k\}$, let $X_i$ contain the pebbles of $p_0$ linked to a pebble in $p_i$, and let $N_i$ be the cardinality of $X_i$. ($X_i$ and $N_i$ are the list and the integer recorded by the corresponding edge of the sub-arc graph $G$.) Naively, we could range over $i \in \{1,\dots,k\}$, remove the pebbles in $X_i$ from the sub-arc $p_0$, place those pebbles on a new sub-arc, and update the sub-arc graph $G$, the lists and the integers, all in $O(N)$ time. Unfortunately, that would lead to a quadratic time algorithm. To overcome this issue, Fulek and Tóth crucially suggest to find some $j \in \{1,\dots,k\}$ maximizing $N_j$, and to range over $i \in \{1,\dots,k\} \setminus \{j\}$. In the end, the pebbles still remaining on $p_0$ are precisely the pebbles in $X_j$, we do not touch them. That takes $O(N - N_j)$ time. Summing this $N - N_j$ quantity over all the splits results in a total of $O(n \log n)$. Indeed, transfer a weight of $N - N_j$ to the pebbles in $X \setminus X_j$, by attributing a weight of $1$ to each pebble. After the split, each of those pebbles ends up in a sub-arc that contains no more than $N / 2$ pebbles. Thus, the sequence of splits attributes $O(\log n)$ weight to each pebble. There are $O(n)$ pebbles.
\end{proof}

\end{document}